\def\cNB{{\mathcal {NB}}}
\def\cC{{\mathcal C}}
\def\cN{{\mathcal N}}
\def\Z{{\mathbb Z}}
\def\Z{{\mathbb Z}}
\def\aa{{\bf a}}
\def\uu{{\bf u}}
\def\vv{{\bf v}}
\def\ww{{\bf w}}
\def\xx{{\bf x}}
\def\yy{{\bf y}}
\def\zz{{\bf z}}
\def\00{{\bf 0}}
\def\+{\oplus}
\def\\{\cr}
\def\({\left(}
\def\){\right)}
\newcommand{\BBZ}{\mathbb{Z}}
\newcommand{\BBR}{\mathbb{R}}
\newcolumntype{M}[1]{>{\centering\arraybackslash}m{#1}}
\newtheoremstyle{case}{}{}{}{}{}{:}{ }{}
\numberwithin{subcase}{case}
\numberwithin{subsubcase}{subcase}
\newtheorem{theorem}{Theorem}[section]
\newtheorem{lemma}[theorem]{Lemma}
\newtheorem{corollary}[theorem]{Corollary}
\newtheorem{example}[theorem]{Example}
\newtheorem{remark}[theorem]{Remark}
\title{A new class of negabent functions
}
\author{Deep Singh and Maheshanand Bhaintwal}
\address{Deep Singh \\
Department of Mathematics and Statistics \\
Central University of Punjab\\
 Bathinda, India-151401}
\email{deepsingh.com}
\address{Maheshanand Bhaintwal \\
Department of Mathematics \\
Indian Institute of Technology Roorkee, India-247667}
\email{maheshanand@ma.iitr.ac.in}
\begin{document}
\maketitle

\begin{abstract}
	Negabent functions were introduced as a generalization of bent functions, which have applications in coding theory and cryptography.  In this paper, we have extended the notion of negabent functions to the functions defined from $\mathbb{Z}_q^n$ to $\mathbb{Z}_{2q}$ ($2q$-negabent), where $q \geq 2$ is a positive integer and $\mathbb{Z}_q$  is the ring of integers modulo $q$. For this, a new unitary transform (the nega-Hadamard transform) is introduced in the current set up, and some of its properties are discussed. Some results related to $2q$-negabent functions are presented. We present two constructions of $2q$-negabent functions. In the first construction, $2q$-negabent functions on $n$ variables are constructed when $q$ is an even positive integer. In the second construction, $2q$-negabent functions on two variables are constructed for arbitrary positive integer $q \ge 2$. Some examples of $2q$-negabent functions for different values of $q$ and $n$ are also presented.
\end{abstract}

\medskip \noindent \textbf{Keywords:} Nega-Hadamard transform (NHT), $2q$-NHT,
$2q$-negabent functions,   $2q$-nega-crosscorrelation, affine functions.

%\medskip \noindent \textbf{2010 Mathematics Subject Classification: }{%
%	94B05, 94B15}

\section{Introduction}
Riera and Parker  \cite{P2000,RP06gen} have extended the concept of  bent functions \cite{R76} to some generalized bent criteria where  Boolean functions are
required to have flat spectra with respect to one or more unitary transformations. The transforms they have
chosen are $n$-fold tensor products of the identity matrix
$\left(
    \begin{array}{cc}
      1 & 0 \\
      0 & 1 \\
    \end{array}
  \right),$
the Walsh-Hadamard matrix  $\frac{1}{\sqrt{2}} \left(
             \begin{array}{cc}
               1 & 1 \\
               1 & -1 \\
             \end{array}
           \right),$
and the nega-Hadamard matrix $ \frac{1}{\sqrt{2}} \left(
                                                    \begin{array}{cc}
                                                      1 & \imath \\
                                                      1 & -\imath \\
                                                    \end{array}
                                                  \right)$, with $\imath^2=-1.$
The \emph{nega-Hadamard transform} (NHT)  of a Boolean function $f$ at $\uu
\in \BBZ_2^n$ is a complex-valued function on $\mathbb{Z}_2^n$
defined as $$H_f (\uu) = \frac{1}{2^\frac{n}{2}}\sum_{\xx \in
\mathbb{Z}_2^n} {(-1)}^{f(\xx)+ \langle\xx, \uu\rangle}
\imath^{w_H(\xx)}~,$$ where $\langle\xx, \uu\rangle$ denotes the usual inner product of $\xx$ and $\uu$.  A Boolean function $f$ is said to be
\emph{negabent} if $\left|H_f(\uu)\right|= 1 $ for all $ \uu \in
\mathbb{Z}_2^n$. The multiset $\{H_f (\uu): ~\uu \in \BBZ_2^n \}$ is
called the \emph{nega-Hadamard spectrum} of $f$.
The sum
$$ C_{f, g} (\uu) = \sum_{\xx \in
\mathbb{Z}_2^n} {(-1)}^{f(\xx)+ g(\xx + \uu)} (-1)^{\langle\xx,
\uu\rangle}$$ is called the \emph{nega-crosscorrelation} between the
Boolean functions $f$ and $g$ at $ \uu \in \mathbb{Z}_2^n$. For
$f=g$, the nega-crosscorrelation
\[C_{f, f}= C_{f}= \sum_{\xx \in \mathbb{Z}_2^n} {(-1)}^{f(\xx)+
f(\xx + \uu)} (-1)^{\langle\xx, \uu\rangle}\]
 is called the
\emph{nega-autocorrelation} of $f$ at $\uu \in \mathbb{Z}_2^n$.

In recent years, the construction of Boolean negabent functions has emerged as an important problem.
The authors in \cite{P2000,PP07,RP06gen,SPP08,SGCGM12,SPT12,ZWP15} have presented several properties and constructions of Boolean negabent
functions.

 As a generalization of Boolean functions, $q$-ary functions have been studied in \cite{KSW85,SBS13}. Recently, Schmidt \cite{S15} has  given a construction of  bent functions for the functions from $\BBZ_q^n$ to $\BBZ_{2q}$.
In this paper,  we present a new class of negabent functions
by considering  functions from $\BBZ_q^n$ to $\BBZ_{2q}$.

Let $\cNB_{n, q}$ be the set of all
functions from $\BBZ_q^n$ to $\BBZ_{2q}$. Let $\xx=(x_1, \ldots, x_n )\in \BBZ_q^n$.  Then we define
$\hat{\xx}=(\hat{x}_1, \ldots, \hat{x}_n )\in \BBZ^n$ where $
\hat{x}_i= \left \{
\begin{array}{ll}
                     x_i,  &  \mbox{  if  }  x_i \geq 0, \\
                       x_i + q,  & \mbox{ if }  x_i < 0.\\
       \end{array}\right.$
Thus $\hat{x}_i$ is the least non-negative integer such that $\hat{x}_i = x_i$ modulo $q$.
Let $f \in \cNB_{n, q}$, $\xi$ a primitive $q^{th}$ root of unity, and $\omega$ be a primitive $2q^{th}$ root
of unity. Then we define the \emph{$2q$-nega-Hadamard transform} ($2q$-NHT) $\cN_f(\uu)$ of $f$ at $\uu\in\BBZ_q^n$ as
$$\cN_f (\uu) =
\frac{1}{q^\frac{n}{2}}\sum_{\xx \in \mathbb{Z}_q^n}
{\omega}^{f(\xx)} \xi^{\langle \hat{\xx}, \hat{\uu}\rangle}
\omega^{\Sigma \hat{x}_i}.$$
 A function $f \in \cNB_{n, q}$ is said to be a \emph{$2q$-negabent function} if $\left|\cN_f(\uu)\right|= 1 $ for every $ \uu \in
\mathbb{Z}_q^n$. The multiset $\{\cN_f (\uu): ~\uu \in \BBZ_q^n \}$ is called the  \emph{$2q$-nega-Hadamard spectrum} of $f$.

We remark here that our motivation for studying this class of negabent functions comes
mainly from the fact that if we consider $q$-ary negabent
functions from $\BBZ_q^n$ to $\BBZ_q$ with $q$-nega-Hadamard
transform
$$\cN_f^{\prime} (\uu) =
\frac{1}{q^\frac{n}{2}}\sum_{\xx \in \mathbb{Z}_q^n} {\xi}^{f(\xx)}
\xi^{ \langle \hat{\xx}, \hat{\uu}\rangle} \omega^{\Sigma \hat{x}_i},
$$
which is a natural generalization of nega-Hadamard transform of a Boolean function, then we are not able to find any function with flat
spectra for $q>2$, despite several attempts for the same through computer search. In fact, it is not known to
us whether such functions exist at all for $q > 2$. What we know is that
for $q>2$, affine functions are not negabent functions for this class (see Theorem \ref{thmoldnega} below).
On the other
hand, in the new setup we have proposed, we get many interesting
examples of $2q$-negabent functions for various values of $q$
and $n.$

Let $f, g \in \cNB_{n, q}$ and $\uu \in \BBZ_q^n$. Then
the sum
 $$ \cC_{f, g}^q(\uu)= \sum_{\xx
\in \mathbb{Z}_q^n} \omega^{f(\xx)-g(\xx + \uu)}
(-1)^{n_q(\hat{\xx}, \hat{\uu})}~,$$ where $n_q(\hat{\xx},
\hat{\uu})=\sum_{i=1}^n \left\lfloor \frac{\hat{x}_i+
\hat{u}_i}{q}\right\rfloor=|\{i : ~ \hat{x}_i + \hat{u}_i \geq q \}|$, is
called the \emph{$2q$-nega-crosscorrelation} ($2q$-NCC) between $f$ and $g$ at $\uu$.
 The identity $\sum_{i=1}^n \left\lfloor
\frac{\hat{x}_i+\hat{u}_i}{q}\right\rfloor=|\{i : ~\hat{x}_i + \hat{u}_i
\geq q \}|$ holds in the present case as $\hat{x}_i + \hat{u}_i <
2q$ for all $x_i, u_i \in \BBZ_q$. For $f =g$, the quantity
$$\cC_f^q(\uu)=\cC_{f, f}^q(\uu)= \sum_{\xx \in
\mathbb{Z}_q^n} \omega^{f(\xx)-f(\xx + \uu)} (-1)^{n_q(\hat{\xx},
\hat{\uu})} $$ is called  the \emph{$2q$-nega-autocorrelation} ($2q$-NAC) of $f$ at $\uu \in \mathbb{Z}_q^n$.
The following is an important basic result.

\begin{lemma}\cite{KSW85} \label{chap5-lemimp}
Let $n$ be a positive integer and $\uu
\in \Z_q^n$. Then
\begin{eqnarray*}\label{eq03}
\sum_{\xx \in \Z_q^n}\xi^{<\uu, ~\xx>}=
    \left \{ \begin{array}{ll}
                    q^{n},  &  \mbox{  if  }  \uu=0, \\
                       0,  & \mbox{ otherwise }.\\
       \end{array}  \right.
\end{eqnarray*}
\end{lemma}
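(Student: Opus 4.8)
The plan is to exploit the product structure of the exponential sum across the $n$ coordinates. Writing $\langle \uu, \xx\rangle = \sum_{i=1}^n u_i x_i$ and using that $\xx$ ranges over the full product $\Z_q^n$, the sum factors completely:
\[
\sum_{\xx \in \Z_q^n} \xi^{\langle \uu, \xx\rangle} = \prod_{i=1}^n \left( \sum_{x_i \in \Z_q} \xi^{u_i x_i} \right).
\]
Thus the whole computation reduces to evaluating a single one-variable character sum $S(u) := \sum_{x \in \Z_q} \xi^{u x}$ for each $u = u_i$.

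Next I would evaluate $S(u)$ in two cases. If $u = 0$ in $\Z_q$, then every summand equals $1$ and $S(0) = q$. If $u \neq 0$, then because $\xi$ is a \emph{primitive} $q$-th root of unity we have $\xi^{u} \neq 1$, so $S(u)$ is a finite geometric series with ratio $\xi^u$, and hence
\[
S(u) = \sum_{x=0}^{q-1} (\xi^u)^x = \frac{\xi^{uq}-1}{\xi^u - 1} = \frac{1-1}{\xi^u-1} = 0,
\]
where the last step uses $\xi^q = 1$.

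Substituting back, if $\uu = 0$ then all $n$ factors equal $q$ and the product is $q^n$; if $\uu \neq 0$, then at least one coordinate satisfies $u_i \neq 0$, so the corresponding factor $S(u_i)$ vanishes and the entire product is $0$. This yields exactly the claimed dichotomy.

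Since this is classical character orthogonality on $\Z_q^n$, there is no genuine obstacle. The only points requiring care are that $\xi$ must be primitive (so that $\xi^{u_i} \neq 1$ precisely when $u_i \not\equiv 0 \pmod q$, which is what makes the off-diagonal geometric series collapse), and that the statement is insensitive to whether $\langle \uu, \xx\rangle$ is read modulo $q$ or as an ordinary integer, since $\xi^q = 1$ in either interpretation.
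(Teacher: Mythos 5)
Your proof is correct. Note that the paper gives no proof of this lemma at all---it is imported verbatim from \cite{KSW85}---so there is no internal argument to compare against; your coordinate-wise factorization followed by the geometric-series evaluation (with primitivity of $\xi$ guaranteeing $\xi^{u_i}\neq 1$ exactly when $u_i\neq 0$) is the standard character-orthogonality proof that the citation stands in for, and the two caveats you flag (primitivity, and insensitivity to reducing $\langle\uu,\xx\rangle$ modulo $q$ since $\xi^q=1$) are precisely the right points of care.
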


In this paper, we investigate some properties of $2q$-NHT and $2q$-negabent functions.  We generalize a result of Schmidt et al. \cite[Lemma 1]{SPP08} (obtained for binary case) to the current setup. We present two constructions of $2q$-negabent functions. In the first construction, $2q$-negabent functions on $n$ variables are constructed when $q$ is an even positive integer. In the second construction, $2q$-negabent functions on two variables are constructed for arbitrary positive integer $q \ge 2$. Some examples of $2q$-negabent functions for different values of $q$ and $n$ have been given.

%The paper is organized as follows. In Section 2, we present some properties of $2q$-nega-Hadamard transform. In Section 3, $2q$-negabent functions are characterized. In Section 4, a construction of $2q$-negabent functions is proposed, when $q$ is even. Section 5 presents several examples of $2q$-negabent functions.

\section{Properties of $2q$-nega-Hadamard transform}

In this section, we present some properties of $2q$-NHT regarding its behavior on various
combinations of the functions in $\cNB_{n, q}$.

The following lemma is an important property and  will be used
frequently in this paper.
\begin{lemma}\label{lemsum} Let $\xx =(x_1, \ldots, x_n)$, $\yy = (y_1, \ldots, y_n)$, and $\zz= (z_1, \ldots, z_n)$ be in $\BBZ_q^n$ such that
$\zz=\xx+\yy.$ Then
\begin{equation*}
\sum \hat{z}_i=\sum \hat{x}_i + \sum \hat{y}_i- q
n_q(\hat{\xx}, \hat{\yy}).
\end{equation*}
\end{lemma}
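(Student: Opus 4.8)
The plan is to reduce the statement to a single-coordinate identity and then sum. Every ingredient in the claim is built coordinatewise: addition in $\BBZ_q^n$ acts componentwise, the map $x_i \mapsto \hat{x}_i$ is defined componentwise, and $n_q(\hat{\xx}, \hat{\yy}) = \sum_{i=1}^n \left\lfloor (\hat{x}_i + \hat{y}_i)/q \right\rfloor$ is a sum of coordinate terms. Hence it suffices to prove, for each fixed index $i$, the scalar identity
\[
\hat{z}_i = \hat{x}_i + \hat{y}_i - q \left\lfloor \frac{\hat{x}_i + \hat{y}_i}{q} \right\rfloor ,
\]
and then add these over $i = 1, \ldots, n$.

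First I would record the range bound. Since $\hat{x}_i, \hat{y}_i \in \{0, 1, \ldots, q-1\}$, we have $0 \le \hat{x}_i + \hat{y}_i \le 2q - 2 < 2q$, so the floor $\left\lfloor (\hat{x}_i + \hat{y}_i)/q \right\rfloor$ equals either $0$ or $1$. Because $z_i = x_i + y_i$ in $\BBZ_q$, the reduced representative $\hat{z}_i$ is the unique element of $\{0, \ldots, q-1\}$ congruent to $\hat{x}_i + \hat{y}_i$ modulo $q$. I would then split into two cases. If $\hat{x}_i + \hat{y}_i < q$, the sum already lies in $\{0, \ldots, q-1\}$, so $\hat{z}_i = \hat{x}_i + \hat{y}_i$ while the floor is $0$, and the scalar identity holds. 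If $\hat{x}_i + \hat{y}_i \ge q$, then by the upper bound $q \le \hat{x}_i + \hat{y}_i \le 2q - 2$, so $\hat{z}_i = \hat{x}_i + \hat{y}_i - q$ and the floor equals $1$; again the scalar identity holds.

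Summing the scalar identity over $i = 1, \ldots, n$ gives
\[
\sum \hat{z}_i = \sum \hat{x}_i + \sum \hat{y}_i - q \sum_{i=1}^n \left\lfloor \frac{\hat{x}_i + \hat{y}_i}{q} \right\rfloor = \sum \hat{x}_i + \sum \hat{y}_i - q\, n_q(\hat{\xx}, \hat{\yy}),
\]
which is the desired equality. There is no serious obstacle here: the only point requiring care is the range bound $\hat{x}_i + \hat{y}_i < 2q$, which is exactly what guarantees that the carry per coordinate is at most one, so that the floor term faithfully counts the indices where a reduction by $q$ occurs. This is the same observation already noted in the text when identifying $\sum_i \left\lfloor (\hat{x}_i + \hat{u}_i)/q \right\rfloor$ with $|\{i : \hat{x}_i + \hat{u}_i \ge q\}|$.
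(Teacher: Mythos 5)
Your proof is correct and follows essentially the same route as the paper's: both reduce to the coordinatewise identity $\hat{z}_i = \hat{x}_i + \hat{y}_i - q\left\lfloor (\hat{x}_i + \hat{y}_i)/q \right\rfloor$, use the bound $\hat{x}_i + \hat{y}_i < 2q$ to see each floor term is $0$ or $1$, and then sum to identify $\sum_i \left\lfloor (\hat{x}_i+\hat{y}_i)/q \right\rfloor$ with $n_q(\hat{\xx},\hat{\yy})$. Your explicit two-case verification of the scalar identity is just a slightly more spelled-out version of the paper's appeal to reduction modulo $q$.
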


\begin{proof} We have
\[
\hat{z}_i =\hat{x}_i + \hat{y}_i ~~\mbox{(mod q)}
=(\hat{x}_i + \hat{y}_i)-q \left\lfloor \frac{\hat{x}_i + \hat{y}_i}{q} \right\rfloor~.\]
So,
\[\sum \hat{z}_i = \sum \hat{x}_i + \sum \hat{y}_i - q \sum
\left\lfloor \frac{\hat{x}_i + \hat{y}_i}{q} \right\rfloor.
\]
Since $\hat{x}_i + \hat{y}_i < 2q,$ for $i=1, 2, \ldots, n$, we have
 $$\left\lfloor \frac{\hat{x}_i +
\hat{y}_i}{q} \right\rfloor =
    \left \{ \begin{array}{ll}
                     1,  &  \mbox{  if  }  \hat{x}_i + \hat{y}_i \geq q, \\
                       0,  & \mbox{ otherwise }.\\
       \end{array}  \right.  $$
Therefore, $\sum \left\lfloor \frac{\hat{x}_i + \hat{y}_i}{q} \right\rfloor = \left|\{i ~:~ \hat{x}_i + \hat{y}_i \geq q\}\right|= n_q(\hat{\xx},
\hat{\yy}).$
The result follows.
\end{proof}

The following theorem generalizes a result of Schmidt et al.
\cite[Lemma 1]{SPP08} (obtained for the binary case) to the current setup.
\begin{theorem}\label{lemimp2} For any $\uu\in \BBZ_q^n,$ we have
\begin{equation*}
\sum_{\xx \in\BBZ_q^n}\xi^{\langle \hat{\uu},
~\hat{\xx}\rangle}\omega^{\Sigma \hat{x}_j}=\frac{1}{\prod_{j=1}^n
\sin (2 \hat{u}_j + 1)\frac{\pi}{2q}} \eta^{n(q-1)-2 \Sigma
\hat{u}_j},
\end{equation*} where $\xi$ is a $q$-th, $\omega$ is a $2q$-th and
$\eta$ is a $4q$-th primitive complex root of unity.
\end{theorem}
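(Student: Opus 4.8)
The plan is to exploit the fact that the summand factors across coordinates. Since $\langle\hat{\uu},\hat{\xx}\rangle=\sum_{j=1}^n\hat{u}_j\hat{x}_j$ and $\sum_j\hat{x}_j$ both split as sums over the coordinates, the whole sum over $\xx\in\BBZ_q^n$ factors as a product of $n$ one-dimensional sums:
\[
\sum_{\xx\in\BBZ_q^n}\xi^{\langle\hat{\uu},\hat{\xx}\rangle}\omega^{\sum_j\hat{x}_j}=\prod_{j=1}^n\Big(\sum_{x=0}^{q-1}\xi^{\hat{u}_j x}\omega^{x}\Big).
\]
So it suffices to evaluate, for a fixed integer $u\in\{0,1,\dots,q-1\}$, the single-variable sum $S(u)=\sum_{x=0}^{q-1}\xi^{ux}\omega^{x}$ and then multiply the results together.

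To carry out the one-dimensional computation I would fix compatible roots of unity, namely $\eta=e^{\imath\pi/(2q)}$, $\omega=\eta^2=e^{\imath\pi/q}$, and $\xi=\eta^4=e^{2\imath\pi/q}$; these are primitive of orders $4q$, $2q$, and $q$ respectively. With this choice $\xi^{u}\omega=e^{\imath\pi(2u+1)/q}$, so $S(u)$ is the geometric series $\sum_{x=0}^{q-1}e^{\imath x\theta}$ with $\theta=\pi(2u+1)/q$. Summing it gives $S(u)=(e^{\imath q\theta}-1)/(e^{\imath\theta}-1)$. The key simplification is that $q\theta=\pi(2u+1)$ with $u$ an integer, so $e^{\imath q\theta}=-1$ and the numerator collapses to $-2$.

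Next I would extract the sine factor. Writing $e^{\imath\theta}-1=2\imath\, e^{\imath\theta/2}\sin(\theta/2)$ and using $\theta/2=(2u+1)\pi/(2q)$ yields
\[
S(u)=\frac{-2}{2\imath\, e^{\imath\theta/2}\sin(\theta/2)}=\frac{\imath\, e^{-\imath\theta/2}}{\sin\big((2u+1)\tfrac{\pi}{2q}\big)}.
\]
The denominator never vanishes, since $0<(2u+1)\pi/(2q)<\pi$ for $0\le u\le q-1$, so the division is valid. It remains to identify the phase $\imath\, e^{-\imath\theta/2}$ as a power of $\eta$: since $\imath=e^{\imath\pi/2}$ we get $\imath\, e^{-\imath\theta/2}=e^{\imath\pi(q-2u-1)/(2q)}=\eta^{(q-1)-2u}$, whence $S(u)=\eta^{(q-1)-2u}/\sin\big((2u+1)\tfrac{\pi}{2q}\big)$.

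Finally, taking the product over $j=1,\dots,n$ and collecting the exponents of $\eta$ gives $\sum_j\big[(q-1)-2\hat{u}_j\big]=n(q-1)-2\sum_j\hat{u}_j$, which reproduces the claimed identity. I expect the only delicate step to be the phase bookkeeping in the last paragraph — verifying that the leftover exponential $\imath\, e^{-\imath\theta/2}$ really does combine into the single integer power $\eta^{(q-1)-2u}$, and that the three roots $\xi,\omega,\eta$ can be chosen compatibly (via $\xi=\eta^4$, $\omega=\eta^2$) so that the right-hand side is well defined. The geometric-series evaluation itself is routine once the observation $e^{\imath q\theta}=-1$ is in hand.
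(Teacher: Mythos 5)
Your proof is correct and follows essentially the same route as the paper's: factor the sum into $n$ one-dimensional geometric series, use $(\omega\xi^{\hat{u}_j})^q=-1$ to collapse the numerator to $\pm 2$, and extract the sine factor and the power of $\eta$ from the denominator. The only cosmetic differences are that you factor $e^{\imath\theta}-1=2\imath\, e^{\imath\theta/2}\sin(\theta/2)$ directly where the paper expands $1-\cos\theta-\imath\sin\theta$ via half-angle identities, and that your explicit compatibility choice $\xi=\eta^4$, $\omega=\eta^2$ makes precise an assumption the paper leaves implicit.
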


\begin{proof} We have
\begin{eqnarray}\label{eq1}
\sum_{\xx \in\BBZ_q^n}\xi^{\langle \hat{\uu},
~\hat{\xx}\rangle}\omega^{\Sigma \hat{x}_j} &=&
\sum_{\xx \in\BBZ_q^n}\xi^{\Sigma \hat{u}_j \hat{x}_j}\omega^{\Sigma \hat{x}_j} \nonumber\\
&=& \prod_{j=1}^n \sum_{x_j \in \Z_q}\xi^{\hat{u}_j \hat{x}_j}\omega^{\hat{x}_j}\nonumber \\
&=& \prod_{j=1}^n \frac{1-(\omega \xi^{\hat{u}_j})^q}{1-\omega \xi^{\hat{u}_j}}\nonumber \\
&=& \prod_{j=1}^n \frac{2}{1-\omega \xi^{\hat{u}_j}}.%\end{split}
\end{eqnarray}
Since $\xi$ and $\omega$ are $q$-th and $2q$-th roots of unity respectively, we have
\begin{multline*}\label{eq2}
1-\omega \xi^{\hat{u}_j}
= 1-
e^{(2 \hat{u}_j +1)\frac{\pi \imath}{q}}\nonumber\\
=1- \cos (2 \hat{u}_j +1)\frac{\pi}{q}- \imath \sin (2 \hat{u}_j +1)\frac{\pi}{q}\nonumber\\
= 2\sin^2(2 \hat{u}_j +1)\frac{\pi}{2q}- 2\imath \sin (2 \hat{u}_j +1)\frac{\pi}{2q}\cos(2 \hat{u}_j +1)\frac{\pi}{2q}\nonumber\\
= 2\sin(2 \hat{u}_j +1)\frac{\pi}{2q}\left[ \sin (2 \hat{u}_j
+1)\frac{\pi}{2q}- \imath \cos (2 \hat{u}_j +1)\frac{\pi}{2q}\right]\nonumber\\
= 2e^{-(q-1-2\hat{u}_j)\frac{\pi \imath}{2q}}\sin(2 \hat{u}_j +1)\frac{\pi}{2q}~.
\end{multline*}
%where $k_j= \sin (2 \hat{u}_j +1)\frac{\pi}{2q}$.
Then it follows that %(\ref{eq2}) that
\begin{eqnarray*}\label{eq3}
\frac{2}{1-\omega \xi^{\hat{u}_j}} &=& \frac{e^{(q-1-2\hat{u}_j)\frac{\pi \imath}{2q}}}{\sin(2 \hat{u}_j +1)\frac{\pi}{2q}} \\
&=& \frac{\eta^{(q-1-2\hat{u}_j)}}{\sin(2 \hat{u}_j +1)\frac{\pi}{2q}}~.
\end{eqnarray*}
Therefore, we have
\begin{eqnarray*}
\sum_{\xx \in\BBZ_q^n}\xi^{\langle\hat{\uu},
~\hat{\xx}\rangle}\omega^{\Sigma \hat{x}_j} &=& \prod_{j=1}^n \frac{2}{1-\omega \xi^{\hat{u}_j}} \\
&=& \prod_{j=1}^n \frac{\eta^{(q-1-2\hat{u}_j)}}{\sin(2 \hat{u}_j +1)\frac{\pi}{2q}}\\
&=& \frac{\eta^{n(q-1)-2 \Sigma \hat{u}_j}}{\prod_{j=1}^n \sin (2 \hat{u}_j +1)\frac{\pi}{2q}}
~.
\end{eqnarray*}
Hence the result.
\end{proof}

\begin{remark}
In the binary case, i.e. for $q=2$, $\prod_{j=1}^n \sin (2 \hat{u}_j
+1)\frac{\pi}{2q}=\prod_{j=1}^n \sin (2 \hat{u}_j
+1)\frac{\pi}{4}=\frac{1}{2^{n/2}}, $ as $\sin (2 \hat{u}_j
+1)\frac{\pi}{4}=\frac{1}{\sqrt{2}}$ for ${u}_j\in\{0, 1\}.$
However, for $q>2$ the product $\prod_{j=1}^n \sin (2 \hat{u}_j
+1)\frac{\pi}{2q}$ depends on the values of ${u}_j$ and is in general
not a constant.
\end{remark}

\begin{theorem}\label{thmoldnega}
For $q>2$, there is no affine function $f:\mathbb{Z}_q^n \rightarrow \mathbb{Z}_q$ such that $|\mathcal{N}^\prime(\uu)| = 1$ for all $\uu \in \mathbb{Z}_q^n$.
\end{theorem}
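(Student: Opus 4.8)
The plan is to evaluate $\cN'_f(\uu)$ explicitly for an affine $f$ and show that its modulus cannot equal $1$ for all $\uu$ once $q>2$. Write an affine function as $f(\xx)=\sum_{j=1}^n a_j\hat x_j + c \pmod q$ with $a_1,\dots,a_n,c\in\BBZ_q$. Substituting into the definition of the $q$-NHT $\cN'_f$ and separating the product over the $n$ coordinates gives
$$\cN'_f(\uu)=\frac{\xi^{c}}{q^{n/2}}\prod_{j=1}^n\;\sum_{x_j\in\BBZ_q}\xi^{(a_j+\hat u_j)\hat x_j}\,\omega^{\hat x_j}.$$
Each inner sum is exactly the one-variable sum appearing in the proof of Theorem \ref{lemimp2}, but with $\hat u_j$ replaced by $\hat v_j:=(a_j+\hat u_j)\bmod q$, since $\xi^{a_j+\hat u_j}=\xi^{\hat v_j}$ and $(\omega\xi^{\hat v_j})^q=-1$.

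Reusing the computation in the proof of Theorem \ref{lemimp2}, the $j$-th factor equals $\eta^{\,q-1-2\hat v_j}/\sin\!\big((2\hat v_j+1)\tfrac{\pi}{2q}\big)$, where $\eta$ is a primitive $4q$-th root of unity. Since $\hat v_j\in\{0,1,\dots,q-1\}$ we have $(2\hat v_j+1)\tfrac{\pi}{2q}\in(0,\pi)$, so each sine is strictly positive and each power of $\eta$ is unimodular. Taking absolute values therefore yields
$$\bigl|\cN'_f(\uu)\bigr|=\frac{1}{q^{n/2}}\prod_{j=1}^n\frac{1}{\sin\!\big((2\hat v_j+1)\tfrac{\pi}{2q}\big)}.$$

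The crux is the last step, which exploits that $\uu$ ranges over all of $\BBZ_q^n$. For fixed $\aa$, as $u_j$ runs over $\BBZ_q$ the value $\hat v_j$ runs over all of $\{0,1,\dots,q-1\}$, independently in each coordinate. Hence the requirement $|\cN'_f(\uu)|=1$ for every $\uu$ forces, by varying a single coordinate at a time, the quantity $\sin\!\big((2k+1)\tfrac{\pi}{2q}\big)$ to be independent of $k\in\{0,1,\dots,q-1\}$ (and equal to $q^{-1/2}$). I would then rule this out for $q>2$: the angles $\tfrac{\pi}{2q}$ and $\tfrac{3\pi}{2q}$ (corresponding to $k=0$ and $k=1$) both lie in $(0,\tfrac\pi2]$ when $q\ge 3$, where $\sin$ is strictly increasing, so $\sin\tfrac{\pi}{2q}<\sin\tfrac{3\pi}{2q}$, contradicting constancy. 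For contrast, when $q=2$ the two relevant angles $\tfrac\pi4$ and $\tfrac{3\pi}4$ have equal sine $\tfrac{1}{\sqrt2}$, which is precisely why the obstruction disappears and ordinary negabentness survives (cf. the Remark after Theorem \ref{lemimp2}).

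The only genuine obstacle is this final paragraph: one must observe that the product structure isolates each coordinate factor, and then that constancy of $\sin\!\big((2k+1)\tfrac{\pi}{2q}\big)$ across $k\in\{0,\dots,q-1\}$ fails exactly when $q>2$. Everything preceding it is a direct application of the one-variable evaluation already carried out inside Theorem \ref{lemimp2}.
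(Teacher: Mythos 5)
Your proof is correct and takes essentially the same route as the paper: both reduce the problem, via the one-variable evaluation inside Theorem~\ref{lemimp2}, to the formula $|\mathcal{N}^\prime_f(\uu)| = q^{-n/2}\prod_{j=1}^n \bigl(\sin (2\hat v_j+1)\tfrac{\pi}{2q}\bigr)^{-1}$ with $\hat v_j = (a_j+\hat u_j) \bmod q$, and then compare two choices of $\uu$ whose shifted first coordinate is $0$ versus $1$, using that $\sin\tfrac{\pi}{2q} < \sin\tfrac{3\pi}{2q}$ for $q>2$ since both angles lie in $(0,\tfrac{\pi}{2}]$ where sine is strictly increasing.
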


\begin{proof}
Let $q>2$, and let $f:\mathbb{Z}_q^n \rightarrow \mathbb{Z}_q$  be an affine function, given by
\[f(\xx) = \langle \aa, \xx\rangle + b, \quad \xx, ~ \aa \in \mathbb{Z}_q^n, ~ b \in \mathbb{Z}_q~.\]
Then for any $\uu \in \mathbb{Z}_q^n$
\begin{align*}
\mathcal{N}^\prime(\uu) & = \frac{1}{q^\frac{n}{2}}\sum_{\xx \in \mathbb{Z}_q^n} {\xi}^{\langle \aa, \xx \rangle+b}
\xi^{\langle\hat{\xx}, \hat{\uu}\rangle} \omega^{\Sigma \hat{x}_i}\\
 &= \frac{1}{q^\frac{n}{2}}\sum_{\xx \in \mathbb{Z}_q^n} {\xi}^{\langle \hat{\aa}, \hat{\xx} \rangle+b}
\xi^{\langle\hat{\xx}, \hat{\uu}\rangle} \omega^{\Sigma \hat{x}_i} \\
&=\frac{1}{q^\frac{n}{2}}\xi^b\sum_{\xx \in \mathbb{Z}_q^n} {\xi}^{\langle \hat{\aa}+\hat{\uu}, \hat{\xx} \rangle} \omega^{\Sigma \hat{x}_i} \\
&=\frac{1}{q^\frac{n}{2}}\xi^b\sum_{\xx \in \mathbb{Z}_q^n} {\xi}^{\langle \hat{\zz}, \hat{\xx} \rangle} \omega^{\Sigma \hat{x}_i}~,
\end{align*}
where $\zz = \aa+\uu$. Then by Theorem \ref{lemimp2},
\[\mathcal{N}^\prime(\uu) = \frac{1}{q^\frac{n}{2}}\xi^b\frac{1}{\prod_{j=1}^n \sin (2 \hat{z}_j +1)\frac{\pi}{2q}}
\eta^{n(q-1)-2 \Sigma \hat{z}_j}~,\]
where $\eta$ is a primitive $4q$-th root of unity. Then
\begin{equation}\label{eqoldnega}
\left|\mathcal{N}^\prime(\uu)\right| = \frac{1}{q^\frac{n}{2}}\frac{1}{\left|\prod_{j=1}^n \sin (2 \hat{z}_j +1)\frac{\pi}{2q}\right|}~.
\end{equation}
Now consider two vectors $\uu_1, \uu_2 \in \mathbb{Z}_q^n$ such that
\begin{align*}
\aa + \uu_1 & = \zz_1 = (0, z_2, z_3, \ldots, z_n)~, \\
\aa + \uu_2 & = \zz_2 = (1, z_2, z_3, \ldots, z_n)~,
\end{align*}
where $z_j \in \mathbb{Z}_q$ are arbitrary. Since, for any fixed $\aa \in \mathbb{Z}_q^n$, $\aa+\zz$ runs over $\mathbb{Z}_q^n$ as $\zz$ runs over $\mathbb{Z}_q^n$, two such elements $\uu_1, \uu_2 \in \mathbb{Z}_q^n$ can always be found.
Then from (\ref{eqoldnega}), we get
\[\frac{|\mathcal{N}^\prime(\uu_1)|}{\left|\mathcal{N}^\prime(\uu_2)\right|} = \frac{\left|\sin \frac{3\pi}{2q}\right|}{\left|\sin \frac{\pi}{2q}\right|}~.\]
Now for $q > 2$,
\[ \frac{\left|\sin \frac{3\pi}{2q}\right|}{\left|\sin \frac{\pi}{2q}\right|} = \frac{\sin \frac{3\pi}{2q}}{\sin \frac{\pi}{2q}} > 1~,\]
because $0 < \frac{\pi}{2q} < \frac{3\pi}{2q} \le \frac{\pi}{2}$, and $\sin \theta$ is strictly increasing in the interval $0\le \theta\le \frac{\pi}{2}$.
Therefore, $\left|\mathcal{N}^\prime(\uu_1)\right| > \left|\mathcal{N}^\prime(\uu_2)\right|$, and hence $\left|\mathcal{N}^\prime(\uu_1)\right|$ and $\left|\mathcal{N}^\prime(\uu_2)\right|$ cannot both be $1$. Hence the result.
\end{proof}

Like in the Boolean case \cite{PP07}, the $2q$-NHT is also a unitary transformation. The
following result gives the inverse of $2q$-NHT of a function
$f\in\cNB_{n, q}$.
\begin{lemma} Let $f\in\cNB_{n, q}$. Then for all $\xx\in \BBZ_q^n,$
we have
\begin{equation*}
\omega^{f(\xx)}= q^{\frac{-n}{2}} \omega^{-\Sigma \hat{x}_i}
\sum_{\uu\in \BBZ_q^n} \cN_f(\uu) \xi^{\langle-\hat{\uu},
\hat{\xx}\rangle}.
\end{equation*}
\end{lemma}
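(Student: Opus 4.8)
The plan is to prove the inversion formula for the $2q$-NHT by direct substitution of the definition of $\cN_f(\uu)$ into the right-hand side and then collapsing the sum over $\uu$ using the orthogonality relation in Lemma~\ref{chap5-lemimp}. First I would substitute
\[
\cN_f(\uu) = q^{-n/2}\sum_{\yy\in\BBZ_q^n}\omega^{f(\yy)}\xi^{\langle\hat{\yy},\hat{\uu}\rangle}\omega^{\Sigma\hat{y}_i}
\]
(using a fresh dummy variable $\yy$) into the expression $q^{-n/2}\omega^{-\Sigma\hat{x}_i}\sum_{\uu}\cN_f(\uu)\xi^{\langle-\hat{\uu},\hat{\xx}\rangle}$. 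This yields a double sum over $\yy$ and $\uu$ with prefactor $q^{-n}$.

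Next I would interchange the order of summation, bringing the $\uu$-sum innermost, and collect the factors depending on $\uu$. The two $\uu$-dependent terms are $\xi^{\langle\hat{\yy},\hat{\uu}\rangle}$ and $\xi^{\langle-\hat{\uu},\hat{\xx}\rangle}=\xi^{\langle\hat{\uu},-\hat{\xx}\rangle}$, so their product is $\xi^{\langle\hat{\uu},\,\hat{\yy}-\hat{\xx}\rangle}$. The inner sum $\sum_{\uu\in\BBZ_q^n}\xi^{\langle\hat{\uu},\,\hat{\yy}-\hat{\xx}\rangle}$ is then exactly of the form treated in Lemma~\ref{chap5-lemimp}, and so it equals $q^n$ when $\hat{\yy}=\hat{\xx}$ (i.e.\ $\yy=\xx$ in $\BBZ_q^n$) and $0$ otherwise. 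This kills all terms except $\yy=\xx$, leaving $q^{-n}\cdot q^n\cdot\omega^{-\Sigma\hat{x}_i}\cdot\omega^{f(\xx)}\omega^{\Sigma\hat{x}_i}=\omega^{f(\xx)}$, as required.

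The one point requiring care is the application of Lemma~\ref{chap5-lemimp}: that lemma is stated as a sum over $\xx\in\BBZ_q^n$ of $\xi^{\langle\uu,\xx\rangle}$, with the character orthogonality keyed to the argument being zero. Here the roles are swapped—I am summing over $\uu$ against a fixed vector $\hat{\yy}-\hat{\xx}$—but since $\xi^{\langle\hat{\uu},\,\hat{\yy}-\hat{\xx}\rangle}$ is symmetric in its two index-vectors, the same orthogonality applies, giving $q^n$ precisely when $\hat{\yy}-\hat{\xx}\equiv\00\pmod q$, that is $\yy=\xx$. Thus the indicator on $\hat{\yy}=\hat{\xx}$ is really the indicator on $\yy=\xx$ in $\BBZ_q^n$, which is what lets the surviving summand reproduce $f(\xx)$ exactly.

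I do not anticipate a genuine obstacle here, as the argument is the standard Fourier/Hadamard inversion computation; the only subtlety worth flagging explicitly is that the three $\omega$-phase factors combine cleanly ($\omega^{-\Sigma\hat{x}_i}$ from the prefactor cancels the $\omega^{\Sigma\hat{x}_i}$ arising from the $\yy=\xx$ term in the definition of $\cN_f$), which is precisely why the nega-phase $\omega^{\Sigma\hat{x}_i}$ is included in both the transform and its stated inverse. One should remember to introduce a distinct dummy variable for the inner definition of $\cN_f$ to avoid conflating it with the fixed $\xx$ on the left-hand side.
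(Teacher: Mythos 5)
Your proof is correct and is exactly the argument the paper intends: the paper's own proof is a one-line remark that the result "follows from the definition of the $2q$-NHT and Lemma \ref{chap5-lemimp}", and your substitution--swap--orthogonality computation (including the observation that $\hat{\yy}-\hat{\xx}\equiv \00 \pmod q$ forces $\yy=\xx$, so the $\omega$-phases cancel) is precisely the detail being elided there.
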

\begin{proof}The result follows from definition of $2q$-NHT and Lemma \ref{chap5-lemimp}.
\end{proof}

In the next result, we show that the conservation law for the $2q$-NHT
values of $f \in \cNB_{n, q}$ holds. We call it the
\emph{$2q$-nega-Parseval's identity}.

\begin{theorem}
 Let $f\in \cNB_{n, q}.$ Then $$\displaystyle \sum_{\uu\in \BBZ_q^n}
 \left|\cN_f(\uu)\right|^2 = q^{n}.$$
\end{theorem}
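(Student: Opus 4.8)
The plan is to run the standard Parseval argument: expand $|\cN_f(\uu)|^2$ as $\cN_f(\uu)\,\overline{\cN_f(\uu)}$, sum over all $\uu\in\BBZ_q^n$, interchange the order of summation so that the sum over $\uu$ becomes innermost, and collapse that inner sum with Lemma \ref{chap5-lemimp}.

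First I would use that $\omega$ and $\xi$ are roots of unity, so that complex conjugation acts by $\omega^a\mapsto\omega^{-a}$ and $\xi^a\mapsto\xi^{-a}$. Writing the defining sum in a variable $\xx$ and its conjugate in a variable $\yy$ gives
\[
|\cN_f(\uu)|^2=\frac{1}{q^{n}}\sum_{\xx\in\BBZ_q^n}\sum_{\yy\in\BBZ_q^n}
\omega^{f(\xx)-f(\yy)}\,\omega^{\Sigma\hat{x}_i-\Sigma\hat{y}_i}\,\xi^{\langle\hat{\xx}-\hat{\yy},\,\hat{\uu}\rangle}.
\]
Summing over $\uu$ and interchanging the summations isolates the factor $\sum_{\uu\in\BBZ_q^n}\xi^{\langle\hat{\xx}-\hat{\yy},\,\hat{\uu}\rangle}$, which carries all the dependence on $\uu$.

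The key step is to evaluate this inner sum. Since $\xi^q=1$, the exponent $\langle\hat{\xx}-\hat{\yy},\,\hat{\uu}\rangle$ only matters modulo $q$, so the inner sum equals $\sum_{\uu\in\BBZ_q^n}\xi^{\langle\xx-\yy,\,\uu\rangle}$, the inner product being read in $\BBZ_q$. By Lemma \ref{chap5-lemimp} this is $q^{n}$ when $\xx-\yy=\00$ and $0$ otherwise, so only the diagonal terms $\xx=\yy$ survive.

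Finally, on the diagonal every remaining exponential is trivial, since $\omega^{f(\xx)-f(\xx)}=1$ and $\omega^{\Sigma\hat{x}_i-\Sigma\hat{x}_i}=1$. Hence $\sum_{\uu\in\BBZ_q^n}|\cN_f(\uu)|^2=\frac{1}{q^{n}}\sum_{\xx\in\BBZ_q^n}q^{n}=q^{n}$, as claimed. I expect no genuine obstacle here; the only point needing care is the inner sum, namely checking that $\xi^{\langle\hat{\xx}-\hat{\yy},\,\hat{\uu}\rangle}$ is insensitive to the choice of representatives so that Lemma \ref{chap5-lemimp} applies and the collapse to the diagonal is exact. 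Note that, in contrast to the $2q$-nega-autocorrelation computations, the nega-phase factors $\omega^{\Sigma\hat{x}_i}$ cancel completely on the diagonal, leaving no residual term.
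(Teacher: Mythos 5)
Your proof is correct and is essentially the paper's own argument: the paper's proof of this theorem is just the one-line remark that it ``follows directly from definition of $2q$-NHT,'' and your expansion of $\cN_f(\uu)\overline{\cN_f(\uu)}$, interchange of summation, and collapse of $\sum_{\uu}\xi^{\langle\hat{\xx}-\hat{\yy},\,\hat{\uu}\rangle}$ via Lemma \ref{chap5-lemimp} (with the correct observation that the exponent only matters modulo $q$) is exactly the omitted computation. The paper also re-derives the same identity later by setting $\zz=\00$ in equation \eqref{eq123}, but that is a corollary of Theorem \ref{thmcwfwg} rather than a genuinely different route for this statement.
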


\begin{proof} The proof follows directly from definition of $2q$-NHT.
\end{proof}

The next result gives a relationship between $2q$-NHT of $f, g\in \cNB_{n, q}$ and
their $2q$-nega-crosscorrelation.

\begin{theorem}\label{thmcwfwg} If $f, g \in \cNB_{n, q}$ and $\uu, \zz \in
\mathbb{Z}_q^n$, then
\begin{equation*} %\label{eq-crosscr1}
\begin{split}
\sum_{\zz \in \mathbb{Z}_q^n} \cC_{f, g}^q(\zz)\omega^{- \Sigma
\hat{z}_i}\xi^{\langle- \hat{\uu}, \hat{\zz}\rangle}
=q^n \cN_f(\uu)\overline{\cN_g(\uu)}, \mbox{ and } \\
\cC_{f, g}^q(\zz)= \omega^{\Sigma \hat{z}_i} \sum_{\uu \in
\mathbb{Z}_q^n} \cN_f(\uu)\overline{\cN_g(\uu)} ~~\xi^{\langle
\hat{\uu},~ \hat{\zz} \rangle}.
\end{split}
\end{equation*}
\end{theorem}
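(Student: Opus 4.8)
The plan is to establish the two identities in sequence: first prove the relation expressing $q^n\cN_f(\uu)\overline{\cN_g(\uu)}$ as a weighted sum of the $2q$-nega-crosscorrelations, and then obtain the second identity as its Fourier inversion over $\BBZ_q^n$ via Lemma \ref{chap5-lemimp}. Throughout I would use that $\omega$ is a primitive $2q$-th root of unity (so $\omega^q=-1$) and that $\xi$ is a $q$-th root of unity (so $\xi$-powers may be reduced modulo $q$).

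For the first identity I would start from the definition $\cC_{f,g}^q(\zz)=\sum_{\xx\in\BBZ_q^n}\omega^{f(\xx)-g(\xx+\zz)}(-1)^{n_q(\hat{\xx},\hat{\zz})}$, substitute it into the left-hand side, and perform the change of variables $\yy=\xx+\zz$, so that for each fixed $\xx$, as $\zz$ ranges over $\BBZ_q^n$ so does $\yy$, with $g(\xx+\zz)=g(\yy)$. The crucial simplification comes from Lemma \ref{lemsum}, which gives $\Sigma\hat{z}_i=\Sigma\hat{y}_i-\Sigma\hat{x}_i+q\,n_q(\hat{\xx},\hat{\zz})$. Since $\omega^q=-1$, we have $\omega^{-q\,n_q(\hat{\xx},\hat{\zz})}=(-1)^{n_q(\hat{\xx},\hat{\zz})}$, so the weight $\omega^{-\Sigma\hat{z}_i}$ combines with the sign $(-1)^{n_q(\hat{\xx},\hat{\zz})}$ already present in the crosscorrelation to leave exactly $\omega^{\Sigma\hat{x}_i-\Sigma\hat{y}_i}$, the two sign contributions cancelling.

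Next, because $\xi$ is a $q$-th root of unity, $\xi^{\langle-\hat{\uu},\hat{\zz}\rangle}$ depends only on $\hat{z}_i\bmod q$, and $\hat{z}_i\equiv\hat{y}_i-\hat{x}_i\pmod q$; hence $\xi^{\langle-\hat{\uu},\hat{\zz}\rangle}=\xi^{\langle\hat{\uu},\hat{\xx}\rangle}\xi^{-\langle\hat{\uu},\hat{\yy}\rangle}$. The double sum then factors as a product of a sum over $\xx$ and a sum over $\yy$: the $\xx$-sum is precisely $q^{n/2}\cN_f(\uu)$, while the $\yy$-sum equals $q^{n/2}\overline{\cN_g(\uu)}$ (using $\omega^{-a}=\overline{\omega^{a}}$ and $\xi^{-b}=\overline{\xi^{b}}$). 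Multiplying gives $q^n\cN_f(\uu)\overline{\cN_g(\uu)}$, which is the first identity. For the second, I would read the first identity as saying that the function $\zz\mapsto\cC_{f,g}^q(\zz)\omega^{-\Sigma\hat{z}_i}$ has discrete transform (with kernel $\xi^{\langle-\hat{\uu},\hat{\zz}\rangle}$) equal to $q^n\cN_f(\uu)\overline{\cN_g(\uu)}$. Multiplying both sides by $\xi^{\langle\hat{\uu},\hat{\zz}\rangle}$ and summing over $\uu\in\BBZ_q^n$, the orthogonality relation of Lemma \ref{chap5-lemimp} collapses the inner sum $\sum_{\uu}\xi^{\langle\hat{\uu},\hat{\zz}-\hat{\zz}'\rangle}$ to $q^n$ when $\zz=\zz'$ and to $0$ otherwise, isolating $\cC_{f,g}^q(\zz)\omega^{-\Sigma\hat{z}_i}$; multiplying through by $\omega^{\Sigma\hat{z}_i}$ yields the stated inversion formula.

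The main obstacle is the bookkeeping in the first identity: one must track the non-negative lifts $\hat{\cdot}$ carefully through the substitution $\yy=\xx+\zz$, and the whole argument hinges on the observation that the sign $(-1)^{n_q(\hat{\xx},\hat{\zz})}$ built into the crosscorrelation is exactly annihilated by the $\omega^q=-1$ term arising from Lemma \ref{lemsum}. Once this cancellation is identified, the remaining steps are routine reductions modulo $q$ for $\xi$ and modulo $2q$ for $\omega$, together with the factorization of the double sum into the $\cN_f(\uu)$ and $\overline{\cN_g(\uu)}$ pieces.
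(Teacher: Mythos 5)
Your proof is correct and is essentially the paper's own argument: the paper computes $\cN_f(\uu)\overline{\cN_g(\uu)}$ by expanding the two transforms, substituting $\yy=\xx+\zz$, and invoking Lemma \ref{lemsum} together with $\omega^{q}=-1$ to recognize the crosscorrelation sum --- exactly the chain of equalities you write, only read in the reverse direction. Your second step (multiplying by $\xi^{\langle\hat{\uu},\hat{\zz}\rangle}$, summing over $\uu$, and collapsing via the orthogonality relation of Lemma \ref{chap5-lemimp}) is likewise identical to the paper's derivation of the inversion formula.
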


\begin{proof} By the definition of  $2q$-NHT, we have
\begin{align}  \nonumber
&\cN_f (\uu) \overline{\cN_g (\uu)} \\  \nonumber
&= \frac{1}{q^n}\sum_{\xx \in
\mathbb{Z}_q^n} \omega^{f(\xx)} {\xi}^{\langle \hat{\uu}, \hat{\xx}
\rangle} \omega^{\Sigma \hat{x}_i} \sum_{\yy \in
\mathbb{Z}_q^n}\omega^{-g(\yy)} \xi^{- \langle \hat{\uu}, \hat{\yy}
\rangle}
\omega^{-\Sigma \hat{y}_i} \\ \nonumber
&= \frac{1}{q^n}\sum_{\xx, \zz \in \mathbb{Z}_q^n} {\omega}^{f(\xx)-
g(\xx + \zz)}\xi^{\langle \hat{\uu}, \hat{\xx}\rangle- \langle
\hat{\uu}, \hat{\xx} + \hat{\zz} \rangle} \omega^{-\Sigma \hat{z}_i
+q n_q(\hat{\xx}, \hat{\zz})} \\ \label{eq11}
&= \frac{1}{q^n}\sum_{\zz \in \mathbb{Z}_q^n} \omega^{-\Sigma
\hat{z}_i} \cC_{f, g}^q(\zz)\xi^{\langle-\hat{\uu},
\hat{\zz}\rangle}.
\end{align}
Then we get
\begin{align*}
&\omega^{\Sigma \hat{z}_i} \sum_{\uu \in \mathbb{Z}_q^n}\cN_f (\uu)
\overline{\cN_g (\uu)}\xi^{\langle \hat{\uu}, \hat{\zz}\rangle} \\
 &=
\frac{1}{q^n}\omega^{\Sigma \hat{z}_i} \sum_{\uu \in
\mathbb{Z}_q^n}\sum_{\xx \in \mathbb{Z}_q^n}\omega^{-\Sigma
\hat{x}_i}\cC_{f, g}^q(\xx)
\xi^{\langle \hat{\uu}, -\hat{\xx} + \hat{\zz}\rangle}\nonumber \\
=& \cC_{f, g}^q(\zz).
\end{align*}
\end{proof}

In particular, if $f=g$, then we have
\begin{equation}\label{eq123}  \cC_{f}^q(\zz) = \cC_{f, f}^q(\zz) = \omega^{\Sigma \hat{z}_i}
 \sum_{\uu\in
\mathbb{Z}_q^n} \left|\cN_f(\uu)\right|^2 \xi^{\langle \hat{\uu},
~\hat{\zz}\rangle}.
\end{equation}
%\end{corollary}
\noindent By putting $\zz =\00$ in $(\ref{eq123})$, we obtain
$\sum_{\uu \in \mathbb{Z}_q^n}\left|\cN_f(\uu)\right|^2 = q^{n},$ which is the
$2q$-nega-Parseval's identity.

\begin{corollary}\label{coronegabent}
A function $f \in \mathcal \cNB_{n, q}$ is $2q$-negabent if
and only if $\cC_f^q(\uu)=0$ for all $\uu \in
\BBZ_q^n\setminus\{0\}$.
\end{corollary}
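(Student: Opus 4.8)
The plan is to read off the equivalence directly from the autocorrelation identity \eqref{eq123}, treating it as a discrete Fourier-type inversion. First I would recall that \eqref{eq123} expresses the $2q$-nega-autocorrelation as
\[
\cC_{f}^q(\zz) = \omega^{\Sigma \hat{z}_i}
 \sum_{\uu\in \mathbb{Z}_q^n} \left|\cN_f(\uu)\right|^2 \xi^{\langle \hat{\uu}, \hat{\zz}\rangle},
\]
so that the quantities $\left|\cN_f(\uu)\right|^2$ play the role of Fourier coefficients and the values $\cC_{f}^q(\zz)$ are (up to the unimodular factor $\omega^{\Sigma \hat{z}_i}$) their inverse transform. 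Evaluating at $\zz=\00$ gives the $2q$-nega-Parseval identity $\sum_{\uu}\left|\cN_f(\uu)\right|^2 = q^n$, which I will use as a normalization.

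For the forward direction, suppose $f$ is $2q$-negabent, so $\left|\cN_f(\uu)\right|^2 = 1$ for every $\uu$. Substituting into the displayed identity, the sum becomes $\sum_{\uu\in\mathbb{Z}_q^n}\xi^{\langle\hat{\uu},\hat{\zz}\rangle}$, which by Lemma \ref{chap5-lemimp} equals $q^n$ when $\zz=\00$ and vanishes otherwise. Hence for $\zz\neq\00$ we obtain $\cC_{f}^q(\zz)=\omega^{\Sigma\hat{z}_i}\cdot 0 = 0$, as required.

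For the converse, assume $\cC_{f}^q(\uu)=0$ for all $\uu\neq\00$. The idea is to invert \eqref{eq123} to recover the coefficients $\left|\cN_f(\uu)\right|^2$. Multiplying \eqref{eq123} by $\omega^{-\Sigma\hat{z}_i}\xi^{-\langle\hat{\vv},\hat{\zz}\rangle}$ and summing over $\zz\in\mathbb{Z}_q^n$, the orthogonality in Lemma \ref{chap5-lemimp} collapses the double sum to the single term $\uu=\vv$, giving $q^n\left|\cN_f(\vv)\right|^2 = \sum_{\zz}\cC_{f}^q(\zz)\omega^{-\Sigma\hat{z}_i}\xi^{-\langle\hat{\vv},\hat{\zz}\rangle}$. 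Under the hypothesis the only surviving term on the right is $\zz=\00$, where $\cC_{f}^q(\00)=\sum_{\xx}1=q^n$; thus $q^n\left|\cN_f(\vv)\right|^2=q^n$, so $\left|\cN_f(\vv)\right|=1$ for every $\vv$, which is precisely $2q$-negabentness.

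The only point requiring care is the converse inversion: one must verify that the exponents $\langle\hat{\uu}-\hat{\vv},\hat{\zz}\rangle$ appearing after interchanging the sums genuinely satisfy the hypothesis of Lemma \ref{chap5-lemimp}, i.e.\ that the relevant sum over $\zz$ is $q^n$ exactly when $\hat{\uu}=\hat{\vv}$ (equivalently $\uu=\vv$ in $\mathbb{Z}_q^n$) and $0$ otherwise. Since $\hat{\cdot}$ is a bijection between $\mathbb{Z}_q^n$ and $\{0,\dots,q-1\}^n$ and the inner product is the standard one, this is immediate, so the orthogonality applies verbatim. I expect no substantive obstacle beyond this bookkeeping; the corollary is essentially a restatement of the Fourier inversion already packaged in \eqref{eq123}.
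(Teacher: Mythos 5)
Your proof is correct and takes essentially the same route as the paper: the paper's (one-line) proof likewise deduces the corollary from the orthogonality relation of Lemma \ref{chap5-lemimp} applied to the identity \eqref{eq123}, which is exactly the substitution and inversion argument you spell out. The only cosmetic difference is that in the converse you re-derive the inversion formula, which the paper already records as the first identity of Theorem \ref{thmcwfwg}; otherwise the two arguments coincide.
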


\begin{proof} The proof follows from
Lemma \ref{chap5-lemimp} and $(\ref{eq123})$.
\end{proof}

\section{Characterization of $2q$-negabent functions}\label{sec4}

Recall that for any fixed $\vv = (v_1, \ldots, v_r)$ with $1 \leq r
\leq n$ and $f \in \mathcal \cNB_{n, q}$, the restriction $f_{\vv}$
of $f$ is
$$f_{\vv}(x_1, \ldots, x_{n-r}) =
f(v_1, \ldots, v_r, x_1, \ldots, x_{n-r}).$$
 Also, let $\uu\ww$ denote the concatenation $(u_1, \ldots, u_r, w_1, \ldots$, $w_{n-r})$ of any two  vectors $\uu = (u_1,
\ldots, u_r) \in \BBZ_q^r$ and $\ww = (w_1, \ldots, w_{n-r}) \in
\BBZ_q^{n-r}$.

\begin{lemma}\label{lemnegaconcatenation}
Let $\uu \in \BBZ_q^r$, $\ww \in \BBZ_q^{n-r}$ and $f \in \cNB_{n,
q}.$ Then the $2q$-nega-autocorrelation ($2q$-NAC) of $f$ is given by
\begin{equation*}
\cC_f^q(\uu \ww) = \sum_{\vv \in \BBZ_q^r}\cC_{f_{\vv}, f_{\vv +
\uu}}^q(\ww) (-1)^{n_q(\hat{\uu}, \hat{\vv})}.
\end{equation*}
\end{lemma}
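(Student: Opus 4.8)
The plan is to expand the left-hand side directly from the definition of the $2q$-nega-autocorrelation and exploit the block structure induced by the concatenation. Writing $\zz = \uu\ww$, every $\xx \in \BBZ_q^n$ splits uniquely as $\xx = \vv\yy$ with $\vv \in \BBZ_q^r$ and $\yy \in \BBZ_q^{n-r}$. Since the hat operation acts coordinatewise it commutes with concatenation, so $\hat{\xx} = \hat{\vv}\,\hat{\yy}$ and $\widehat{\uu\ww} = \hat{\uu}\,\hat{\ww}$. Under this splitting the phase becomes $\omega^{f(\vv\yy) - f(\vv\yy + \uu\ww)} = \omega^{f_{\vv}(\yy) - f_{\vv+\uu}(\yy+\ww)}$, where I use $f(\vv\yy) = f_{\vv}(\yy)$ and $f((\vv+\uu)(\yy+\ww)) = f_{\vv+\uu}(\yy+\ww)$.

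The key observation is that the counting function $n_q$ is additive with respect to the concatenation. Since $n_q(\hat{\aa}, \hat{\bb}) = |\{i : \hat{a}_i + \hat{b}_i \geq q\}|$ is merely a count of coordinates, partitioning the $n$ coordinates into the first $r$ and the last $n-r$ yields
$$n_q(\widehat{\vv\yy}, \widehat{\uu\ww}) = n_q(\hat{\vv}, \hat{\uu}) + n_q(\hat{\yy}, \hat{\ww}).$$
Hence $(-1)^{n_q(\widehat{\vv\yy}, \widehat{\uu\ww})} = (-1)^{n_q(\hat{\vv}, \hat{\uu})}(-1)^{n_q(\hat{\yy}, \hat{\ww})}$, so the $\vv$-dependent sign can be pulled out of the inner sum over $\yy$.

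Assembling these pieces, the double sum over $(\vv, \yy)$ factors as
$$\cC_f^q(\uu\ww) = \sum_{\vv \in \BBZ_q^r} (-1)^{n_q(\hat{\vv}, \hat{\uu})} \sum_{\yy \in \BBZ_q^{n-r}} \omega^{f_{\vv}(\yy) - f_{\vv+\uu}(\yy+\ww)}(-1)^{n_q(\hat{\yy}, \hat{\ww})}.$$
Finally I would recognize the inner sum as exactly $\cC_{f_{\vv}, f_{\vv+\uu}}^q(\ww)$, by matching it against the definition of the $2q$-nega-crosscorrelation with $f \mapsto f_{\vv}$, $g \mapsto f_{\vv+\uu}$, and displacement $\ww$, and then invoke the symmetry $n_q(\hat{\vv}, \hat{\uu}) = n_q(\hat{\uu}, \hat{\vv})$ to rewrite the sign and reach the stated identity. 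I do not expect any genuine obstacle: the single point that warrants explicit justification is the additivity of $n_q$ over the concatenation, since that is precisely what allows the double sum to separate into the claimed form.
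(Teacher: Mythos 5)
Your proposal is correct and follows essentially the same route as the paper's proof: decompose $\xx = \vv\yy$ by concatenation, use $f(\vv\yy+\uu\ww) = f_{\vv+\uu}(\yy+\ww)$, split $n_q$ additively over the two blocks of coordinates, and recognize the inner sum as $\cC_{f_{\vv}, f_{\vv+\uu}}^q(\ww)$. Your explicit appeal to the symmetry $n_q(\hat{\vv},\hat{\uu}) = n_q(\hat{\uu},\hat{\vv})$ is a minor point the paper glosses over by swapping argument orders silently, but it is the same argument.
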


\begin{proof} We have
\begin{equation*}
\begin{split}
\cC_f^q(\uu\ww) &= \sum_{\xx \in \BBZ_q^n }\omega^{f(\xx) - f(\xx
+\uu\ww)}(-1)^{n_q(\hat{\uu} \hat{\ww}, \hat{\xx})} \\
&=\sum_{\vv \in \BBZ_q^r }\sum_{\zz \in \BBZ_q^{n-r}
}\omega^{f(\vv\zz) - f(\vv\zz +\uu\ww)}(-1)^{n_q(\hat{\uu}
\hat{\ww}, \hat{\vv} \hat{\zz})},
\end{split}
\end{equation*}
\noindent where $\xx \in \BBZ_q^n$ can be considered as a vector
concatenation of  $\vv \in \BBZ_q^r$ and  $\zz \in \BBZ_q^{n-r}$.
Also, for any fixed vectors $\vv$ and $\uu$, we have $f(\vv \zz)=f_{\vv}(\zz)$ and
$f(\vv\zz +\uu\ww)=f_{\vv+\uu}(\zz +\ww)$. Then
\begin{equation*}
\begin{split}
&\cC_f^q(\uu\ww) \\
& =\sum_{\vv \in \BBZ_q^r }\sum_{\zz \in \BBZ_q^{n-r}
}\omega^{f_{\vv}(\zz) - f_{\vv+\uu}(\zz +\ww)} (-1)^{n_q(\hat{\uu},
\hat{\vv})+
n_q(\hat{\ww}, \hat{\zz})}\\
&=\sum_{\vv \in \BBZ_q^r } \cC_{f_{\vv},
f_{\vv+\uu}}^q(\ww)(-1)^{n_q(\hat{\uu}, \hat{\vv})}. \hspace{5.0cm}
\end{split}
\end{equation*}
\end{proof}

Two functions $ f, g \in \cNB_{n, q}$ are said to have {\em
complementary $2q$-NAC} if for all $\uu \in
\BBZ_q^n\setminus \{\00\}$,  $\cC_f^q(\uu) + \cC_g^q(\uu) = 0 $.
In the following theorem, we present a relationship between the
$2q$-NHT of $f,g \in \BBZ_q^n$ and their $2q$-NAC.

\begin{theorem}\label{thmnegacomplimentaryauto}
 Two functions $ f, g \in \cNB_{n, q}$ have
 complementary $2q$-NAC if and only if
\begin{equation*}%\label{eq115}
|\cN_f(\uu)|^2 + |\cN_g(\uu)|^2 = 2 ~\mbox{for all} ~ \uu \in
\BBZ_q^n.
\end{equation*}
\end{theorem}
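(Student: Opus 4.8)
The plan is to derive both implications from equation (\ref{eq123}), which already expresses the $2q$-NAC of a single function as a character sum of its squared $2q$-NHT magnitudes. First I would apply (\ref{eq123}) to $f$ and to $g$ and add the two identities; since $\omega^{\Sigma \hat{z}_i}$ is a common nonzero factor, this gives, for every $\zz \in \BBZ_q^n$,
\begin{equation*}
\cC_f^q(\zz) + \cC_g^q(\zz) = \omega^{\Sigma \hat{z}_i} \sum_{\uu \in \BBZ_q^n}\left(|\cN_f(\uu)|^2 + |\cN_g(\uu)|^2\right)\xi^{\langle \hat{\uu},~ \hat{\zz}\rangle}.
\end{equation*}
Writing $S(\uu) = |\cN_f(\uu)|^2 + |\cN_g(\uu)|^2$ and $T(\zz) = \sum_{\uu}S(\uu)\xi^{\langle \hat{\uu},~\hat{\zz}\rangle}$, the statement to be proved becomes: $T(\zz) = 0$ for all $\zz \neq \00$ if and only if $S \equiv 2$.

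For the ``if'' direction, suppose $S(\uu) = 2$ for all $\uu$. Then $T(\zz) = 2\sum_{\uu}\xi^{\langle\hat{\uu},~\hat{\zz}\rangle}$, which by Lemma \ref{chap5-lemimp} equals $0$ for every $\zz \neq \00$. Since $\omega^{\Sigma\hat{z}_i} \neq 0$, this yields $\cC_f^q(\zz) + \cC_g^q(\zz) = 0$ for all $\zz \neq \00$, i.e.\ $f$ and $g$ have complementary $2q$-NAC.

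For the ``only if'' direction, assume $f,g$ have complementary $2q$-NAC, so $T(\zz) = 0$ for all $\zz \neq \00$. To recover $S$ I would invert the transform: fix $\vv \in \BBZ_q^n$, multiply $T(\zz)$ by $\xi^{\langle -\hat{\vv},~\hat{\zz}\rangle}$, sum over $\zz$, and interchange the order of summation to obtain
\begin{equation*}
\sum_{\zz \in \BBZ_q^n} T(\zz)\,\xi^{\langle -\hat{\vv},~\hat{\zz}\rangle} = \sum_{\uu \in \BBZ_q^n}S(\uu)\sum_{\zz \in \BBZ_q^n}\xi^{\langle \hat{\uu}-\hat{\vv},~\hat{\zz}\rangle} = q^n S(\vv),
\end{equation*}
where the inner sum is evaluated by Lemma \ref{chap5-lemimp} (the exponent depends only on $\uu-\vv$ modulo $q$, so only the term $\uu=\vv$ survives). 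On the other hand, the assumption $T(\zz)=0$ for $\zz\neq\00$ collapses the left-hand side to $T(\00) = \sum_{\uu}S(\uu)$, which by the $2q$-nega-Parseval's identity applied to $f$ and $g$ equals $q^n + q^n = 2q^n$. Hence $q^n S(\vv) = 2q^n$, giving $S(\vv)=2$ for all $\vv$.

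The routine parts are the addition of the two copies of (\ref{eq123}) and the two applications of Lemma \ref{chap5-lemimp}. The main point to get right is the ``only if'' direction: the vanishing of $T$ off the origin does not by itself force $S$ to be constant, but combined with the value $T(\00)=2q^n$ supplied automatically by Parseval, the orthogonality relation pins down $S\equiv 2$ uniquely. Care is needed in the inversion step to justify the interchange of summations and to note that $\langle \hat{\uu}-\hat{\vv},\hat{\zz}\rangle$ enters only through its residue modulo $q$, so that Lemma \ref{chap5-lemimp} applies with the selecting condition $\uu=\vv$.
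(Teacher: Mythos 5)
Your proof is correct. Both directions go through: the ``if'' direction is immediate from (\ref{eq123}) and Lemma \ref{chap5-lemimp}, and the ``only if'' direction's inversion argument is sound — the finite interchange of sums is harmless, you correctly note that $\langle \hat{\uu}-\hat{\vv},\hat{\zz}\rangle$ matters only modulo $q$, and Parseval pins down $T(\00)=2q^n$.

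The route differs from the paper's in one direction. The paper's (very terse) proof cites equation \eqref{eq11}, the \emph{forward} identity
$\cN_f(\uu)\overline{\cN_g(\uu)} = q^{-n}\sum_{\zz}\omega^{-\Sigma\hat{z}_i}\,\cC^q_{f,g}(\zz)\,\xi^{\langle-\hat{\uu},\hat{\zz}\rangle}$,
specialized to $g=f$ and added for $f$ and $g$. With that identity in hand, the implication ``complementary $2q$-NAC $\Rightarrow$ spectra sum to $2$'' is a one-liner: all terms with $\zz\neq\00$ cancel by hypothesis, and the $\zz=\00$ term contributes $q^{-n}\bigl(\cC^q_f(\00)+\cC^q_g(\00)\bigr) = q^{-n}(q^n+q^n)=2$ directly from the definition of the $2q$-NAC. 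You instead work entirely from the inverse identity (\ref{eq123}) and recover this implication by Fourier inversion plus the $2q$-nega-Parseval identity. The two arguments are equivalent in content — (\ref{eq11}) and (\ref{eq123}) are Fourier duals of each other, and Parseval is itself the $\zz=\00$ case of (\ref{eq123}) combined with $\cC^q_f(\00)=q^n$ — so your inversion step essentially re-derives \eqref{eq11} on the fly. What your version buys is self-containedness from a single identity; what the paper's buys is brevity, since citing the forward identity makes the harder direction immediate.
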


\begin{proof}
The proof follows from \eqref{eq11} and the definition of $2q$-NAC.
\end{proof}

\begin{remark}
For arbitrary positive
integers $r, s$ and $q$, the direct sum $f\in \cNB_{r+s, q}$  of two $2q$-negabent
functions $f_1 \in \cNB_{r, q}$ and $f_2 \in \cNB_{s, q}$ is $2q$-negabent.
\end{remark}

The next result presents a characterization
of $2q$-negabent functions for $q=4$ on $n+1$ variables in
terms of its subfunctions on $n$ variables.

\begin{theorem} A function $h \in \cNB_{n + 1, 4}$ expressed as
\begin{eqnarray*}
h(x_1, \ldots ,x_{n+1})=(1+ \hat{x}_{n+1})f(x_1,\hdots ,x_n) \\
+ \hat{x}_{n + 1}g(x_1, \ldots,x_n),
\end{eqnarray*}
where $f, g \in \cNB_{n, 4},$ is $2q$-negabent if and only if
\begin{enumerate}
\item [{(i)}] $|\sum_{j=0}^3 \omega^j \cN_{h_j}(\uu)|=2$ for all $\uu\in
\BBZ_4^n,$ where $\omega=(1+\imath)/\sqrt{2}$ is a primitive $8$-th
root
of unity.%\vspace{.2cm}
\item [{(ii)}]
$\frac{\cN_{h_0}(\uu)-\omega^2\cN_{h_2}(\uu)}{ \omega
\cN_{h_1}(\uu)- \omega^3 \cN_{h_3}(\uu)}= \phi(\uu)$ and
$\frac{\cN_{h_0}(\uu)+ \omega^2\cN_{h_2}(\uu)}{\omega
\cN_{h_1}(\uu)+\omega^3\cN_{h_3}(\uu)} = \imath \psi(\uu)$,
$\phi(\uu),
\psi(\uu) \in \BBR$. %\vspace{.2cm}
\item [{(iii)}]
$\sum_{j=0}^3 \left|\cN_{h_j}(\uu)\right|^2=4$ for all $\uu\in \BBZ_4^n$, and
\vspace{.2cm} $\overline{\cN_{h_0}(\uu)}\cN_{h_2}(\uu)
 -\cN_{h_0}(\uu)\overline{\cN_{h_2}(\uu)}+\overline{\cN_{h_1}(\uu)}\cN_{h_3}(\uu)-\cN_{h_1}(\uu)\overline{\cN_{h_3}(\uu)}=0.$
\end{enumerate}
\end{theorem}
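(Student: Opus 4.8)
The plan is to compute $\cN_h(\uu\, c)$ for $h \in \cNB_{n+1,4}$ directly from the definition, where $c \in \BBZ_4$ is the value of the last variable $x_{n+1}$ and $\uu \in \BBZ_4^n$. Since $h$ splits according to the value of $\hat{x}_{n+1} \in \{0,1,2,3\}$ via the interpolation $h = (1+\hat{x}_{n+1})f + \hat{x}_{n+1}g$, I would first record the four restrictions $h_j$ (for $j=0,1,2,3$) obtained by fixing $x_{n+1}=j$; note $h_0 = f$ and the other $h_j$ are explicit combinations of $f$ and $g$. Splitting the sum over $\BBZ_4^{n+1}$ according to the last coordinate, and using that the factor $\omega^{\Sigma\hat{x}_i}$ and $\xi^{\langle\hat{\uu}c,\hat{\xx}\rangle}$ factor through the last coordinate, I expect to obtain an expression of the shape
\begin{equation*}
\cN_h(\uu\,c) = \frac{1}{2}\sum_{j=0}^{3}\omega^{j}\,\xi^{\hat{c}\,j}\,\cN_{h_j}(\uu),
\end{equation*}
where the extra factor of $1/2$ and the $\omega^j\xi^{\hat{c}j}$ weights come from the $q^{-1/2}=2^{-1}$ normalization and from the last-coordinate contribution $\omega^{\hat{x}_{n+1}}\xi^{\hat{c}\hat{x}_{n+1}}$ to the $(n+1)$-variable transform. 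Here $\xi$ is a primitive $4$th root (so $\xi=\imath$) and $\omega$ a primitive $8$th root, consistent with $q=4$.

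Next I would impose the $2q$-negabent condition $|\cN_h(\uu\,c)|=1$ for every $\uu\in\BBZ_4^n$ and every $c\in\BBZ_4$, and translate the four equations (one per value of $c$) into statements about the four quantities $\cN_{h_j}(\uu)$. Squaring, $|\cN_h(\uu\,c)|^2=1$ becomes $\bigl|\sum_j \omega^j\xi^{\hat c j}\cN_{h_j}(\uu)\bigr|^2 = 4$. Summing these four identities over $c$ and using the orthogonality relation of Lemma \ref{chap5-lemimp} (the cross terms in the expanded modulus square vanish after summation over $c$) should collapse the left side to $4\sum_j|\cN_{h_j}(\uu)|^2$, yielding the first half of condition (iii), namely $\sum_{j=0}^3|\cN_{h_j}(\uu)|^2=4$. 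This is the cleanest of the three conditions to extract. The case $c=0$ gives $|\sum_{j=0}^3\omega^j\cN_{h_j}(\uu)|=2$, which is precisely condition (i).

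The remaining work is to show that conditions (i)--(iii) together are \emph{equivalent} to all four equations $|\cN_h(\uu\,c)|=1$ holding, not merely implied by them. I would expand $|\cN_h(\uu\,c)|^2=4$ for each of the four values $c=0,1,2,3$ and separate the real diagonal part $\sum_j|\cN_{h_j}|^2$ from the cross terms. Grouping the $\cN_{h_j}$ into the even-index combination $\cN_{h_0}\pm\omega^2\cN_{h_2}$ and the odd-index combination $\omega\cN_{h_1}\pm\omega^3\cN_{h_3}$ is the natural change of variables suggested by condition (ii): with $\xi=\imath$, the weights $\xi^{\hat c j}$ act on even and odd $j$ separately, so the four conditions on $c$ reorganize into statements about the magnitudes and the argument ratios of these two combined quantities. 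The ratio conditions in (ii), asserting that $(\cN_{h_0}-\omega^2\cN_{h_2})/(\omega\cN_{h_1}-\omega^3\cN_{h_3})$ is real and $(\cN_{h_0}+\omega^2\cN_{h_2})/(\omega\cN_{h_1}+\omega^3\cN_{h_3})$ is purely imaginary, are exactly the phase-alignment constraints needed so that the four modulus-one equations have a consistent simultaneous solution; the second displayed identity in (iii) is the cross-term cancellation that complements the diagonal identity.

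The main obstacle I anticipate is the equivalence in the reverse direction: verifying that (i)--(iii) are not only necessary but jointly \emph{sufficient} to force $|\cN_h(\uu\,c)|=1$ for all four $c$ simultaneously. The necessity direction is a mechanical expansion, but sufficiency requires checking that the information encoded in the single magnitude condition (i) (for $c=0$), the two ratio/reality conditions (ii), and the two identities (iii) pins down all four complex numbers $\cN_{h_0},\dots,\cN_{h_3}$ tightly enough that the other three magnitudes ($c=1,2,3$) are automatically equal to $1$. I would handle this by writing each $|\cN_h(\uu\,c)|^2$ explicitly in terms of the even/odd combinations, showing that (iii) fixes the sum of squared magnitudes, (ii) fixes the relevant arguments, and (i) fixes the one remaining degree of freedom, so that the system of four equations is equivalent to the three listed conditions. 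Care is needed with the degenerate cases where a denominator in (ii) vanishes, which I would address separately by noting that such vanishing forces the corresponding combination to have a definite value that the other conditions still determine.
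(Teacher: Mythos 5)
Your proposal is essentially correct, but there is almost nothing in the paper to compare it with: the paper's entire proof of this theorem is the single sentence that the assertions ``follow from the definition of $2q$-NHT,'' so your sketch supplies precisely the computation the paper omits. The steps you outline do check out. The decomposition $\cN_h(\uu\,c)=\tfrac{1}{2}\sum_{j=0}^{3}\omega^{j}\xi^{j\hat{c}}\,\cN_{h_j}(\uu)$ is correct, the orthogonality argument yields the first identity of (iii), and with $\xi=\imath=\omega^2$ the four modulus conditions group exactly as you predict: writing $E_{\pm}=\cN_{h_0}(\uu)\pm\omega^2\cN_{h_2}(\uu)$, $O=\omega\cN_{h_1}(\uu)+\omega^3\cN_{h_3}(\uu)$, $D=\omega\cN_{h_1}(\uu)-\omega^3\cN_{h_3}(\uu)$, one finds $2\cN_h(\uu\,0)=E_{+}+O$, $2\cN_h(\uu\,2)=E_{+}-O$, $2\cN_h(\uu\,1)=E_{-}+\imath D$, $2\cN_h(\uu\,3)=E_{-}-\imath D$. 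Since $|X+Y|=|X-Y|=2$ holds iff $|X|^2+|Y|^2=4$ and $\Re(X\overline{Y})=0$, negabentness of $h$ is equivalent to: $|E_{+}|^2+|O|^2=4$ and $|E_{-}|^2+|D|^2=4$ (whose sum and difference are, up to constants, the two identities of (iii)), together with $\Re(E_{+}\overline{O})=0$ and $\Im(E_{-}\overline{D})=0$ (which are the two conditions of (ii) written in ratio form).

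Two corrections to your plan. First, condition (i) is redundant: it is just the $c=0$ instance and is already implied by (ii) and (iii), so there is no ``remaining degree of freedom'' for it to fix; sufficiency needs only (ii) and (iii), which makes the equivalence you were worried about entirely mechanical. Second, the degenerate case you flag is real, but it is a defect of the theorem's statement rather than of your argument: if, say, $O=0$ at some $\uu$ for a negabent $h$, the second ratio in (ii) is undefined and the stated equivalence fails as written; the robust formulation replaces the ratios by the product conditions $\Re(E_{+}\overline{O})=0$ and $\Im(E_{-}\overline{D})=0$, which is what your expansion of $|\cN_h(\uu\,c)|^2$ naturally produces anyway.
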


\begin{proof} The proofs of these assertions follow from the definition of $2q$-NHT.\end{proof}

\section{Constructions}

An interesting problem in cryptography is to construct functions  having flat spectra with respect to the transformation employed. In this section, we present two constructions of $2q$-negabent functions. In the first construction, $2q$-negabent functions are constructed when $q$ is an even positive integer. We first establish the following lemma.

\begin{lemma} \label{lemma5}
If $q$ is an even positive integer, then for any $a \in \mathbb{Z}_q$,
\[\sum_{x \in \mathbb{Z}_q} \omega^{x^2} = \sum_{x \in \mathbb{Z}_q} \omega^{(x+a)^2}~,\]
where $x^2$ and $(x+a)^2$ are computed modulo $2q$.
\end{lemma}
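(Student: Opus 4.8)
The plan is to reduce the whole statement to the single algebraic fact that, since $q$ is even, $q^2 \equiv 0 \pmod{2q}$; indeed writing $q = 2m$ gives $q^2 = m\cdot 2q$. Everything then follows by a periodicity argument, exploiting that $\omega$ is a $2q$-th root of unity.

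First I would record the key periodicity. For any integer $t$,
$$(t+q)^2 = t^2 + 2qt + q^2 \equiv t^2 \pmod{2q},$$
since both $2qt$ and $q^2$ vanish modulo $2q$ (the latter using that $q$ is even). Hence the map $t \mapsto \omega^{t^2}$, with $t^2$ read modulo $2q$, is periodic in the integer variable $t$ with period $q$; in particular $\omega^{t^2}$ depends only on $t \bmod q$, so the sum of $\omega^{t^2}$ over any block of $q$ consecutive integers is the same and equals $\sum_{x \in \mathbb{Z}_q}\omega^{x^2}$.

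Next I would interpret the right-hand side through representatives. Writing $\hat{x}, \hat{a} \in \{0, 1, \ldots, q-1\}$ for the least non-negative representatives, the element $x + a \in \mathbb{Z}_q$ has representative $\widehat{x+a} = \hat{x} + \hat{a} - q\lfloor (\hat{x}+\hat{a})/q\rfloor$, which differs from the integer $\hat{x} + \hat{a}$ by a multiple of $q$. By the periodicity just established, $\omega^{(x+a)^2} = \omega^{(\hat{x}+\hat{a})^2}$, so the shift by $a$ causes no discrepancy between reducing modulo $q$ before or after squaring. Consequently,
$$\sum_{x \in \mathbb{Z}_q}\omega^{(x+a)^2} = \sum_{\hat{x}=0}^{q-1}\omega^{(\hat{x}+\hat{a})^2},$$
which is exactly the sum of $\omega^{t^2}$ over the $q$ consecutive integers $t = \hat{a}, \hat{a}+1, \ldots, \hat{a}+q-1$. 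By the periodicity this block sum equals the block sum over $t = 0, 1, \ldots, q-1$, namely $\sum_{x \in \mathbb{Z}_q}\omega^{x^2}$, which is the desired identity.

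The only real subtlety, and the step I expect to be the crux, is the interaction between the two moduli: the argument of the square lives in $\mathbb{Z}_q$ while the exponent of $\omega$ lives modulo $2q$. Shifting $x$ by $a$ can force a wrap-around $\hat{x}+\hat{a} \geq q$, and a priori $(\hat{x}+\hat{a}-q)^2$ need not agree with $(\hat{x}+\hat{a})^2$ modulo $2q$. It is precisely the evenness of $q$, through $q^2 \equiv 0 \pmod{2q}$, that forces these to agree and makes $\omega^{t^2}$ genuinely $q$-periodic; for odd $q$ the relation $(t+q)^2 \equiv t^2 + q \pmod{2q}$ breaks the periodicity and the identity fails in general. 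Once the periodicity is in hand, the remainder is the routine re-indexing of a sum over a complete block of residues.
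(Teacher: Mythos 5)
Your proof is correct and takes essentially the same approach as the paper's: both hinge on the observation that $q$ even gives $q^2 \equiv 0 \pmod{2q}$, hence $(t+q)^2 \equiv t^2 \pmod{2q}$, so that $\omega^{t^2}$ is $q$-periodic in the integer $t$ and the identity reduces to re-indexing a sum over one full period. Your write-up merely makes explicit, via the representative $\widehat{x+a} = \hat{x}+\hat{a} - q\lfloor(\hat{x}+\hat{a})/q\rfloor$, the wrap-around issue that the paper handles by noting that the list of squares $0,1^2,\ldots,(q-1)^2$ repeats and that any $q$ consecutive squares form the same multiset.
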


\begin{proof}
Since $q$ is even, $q^2 = 0$ (mod $2q$), and hence $(q+a)^2 = a^2$ (mod $2q$) for any $a \in \mathbb{Z}_q$. Therefore, the list of the squares of the elements of $\mathbb{Z}_{2q}$ is
\[0, 1^2, 2^2, \ldots, (q-1)^2, 0, 1^2, 2^2, \ldots, (q-1)^2~.\]
This implies that, for any $a \in \mathbb{Z}_q$, the list $0, 1^2, 2^2, \ldots, (q-1)^2$ (mod $2q$) and the list $a^2, (a+1)^2, (a+2)^2, \ldots, (a+q-1)^2$ (mod $2q$) are identical. Since $\omega$ is a $2q$th root of unity, it follows that
\[\sum_{x \in \mathbb{Z}_q} \omega^{x^2} = \sum_{x \in \mathbb{Z}_q} \omega^{(x+a)^2}~.\]
\end{proof}

\begin{theorem}\label{thm7}
Let $q$ be an even positive integer. Then the function $f: \BBZ_q^n \rightarrow \BBZ_{2q}$, defined by
$$f(x_1,  \ldots, x_n) = \hat{x}_1^2+ \cdots + \hat{x}_n^2 - \hat{x}_1 - \cdots - \hat{x}_n~,$$
is a $2q$-negabent function.
\end{theorem}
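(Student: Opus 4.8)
The plan is to exploit the coordinate-separable structure of $f$ so that $\cN_f(\uu)$ factors into a product of one-variable sums, to evaluate each factor up to a unit-modulus scalar using Lemma \ref{lemma5}, and finally to pin down the common magnitude via the $2q$-nega-Parseval's identity already established above.

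First I would substitute $f(\xx) = \sum_i(\hat{x}_i^2 - \hat{x}_i)$ into the definition of the $2q$-NHT. Because $\omega$ is a primitive $2q$-th root of unity and $\xi = \omega^2$ is the compatible choice used in the paper (as in the proof of Theorem \ref{lemimp2}), the factor $\omega^{-\sum \hat{x}_i}$ coming from $f$ cancels exactly against the factor $\omega^{\Sigma \hat{x}_i}$ appearing in the transform. The surviving exponent $\sum_i(\hat{x}_i^2 + 2\hat{x}_i\hat{u}_i)$ separates across coordinates, giving
\[
\cN_f(\uu) = \frac{1}{q^{n/2}}\prod_{i=1}^n\left(\sum_{x\in\mathbb{Z}_q}\omega^{\hat{x}^2}\,\xi^{\hat{x}\hat{u}_i}\right).
\]
Next I would complete the square inside each factor: writing $\xi^{\hat{x}\hat{u}_i} = \omega^{2\hat{x}\hat{u}_i}$, the summand becomes $\omega^{(\hat{x}+\hat{u}_i)^2}\omega^{-\hat{u}_i^2}$, so the $i$-th factor equals $\omega^{-\hat{u}_i^2}\sum_{x\in\mathbb{Z}_q}\omega^{(\hat{x}+\hat{u}_i)^2}$. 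This is precisely where Lemma \ref{lemma5} enters (with $a=\hat{u}_i$): it replaces the shifted sum by $S := \sum_{x\in\mathbb{Z}_q}\omega^{\hat{x}^2}$, which no longer depends on $\uu$. Hence $\cN_f(\uu) = q^{-n/2}\,\omega^{-\sum_i\hat{u}_i^2}\,S^n$, and since the prefactor has modulus one, $|\cN_f(\uu)| = |S|^n/q^{n/2}$ is the \emph{same} constant for every $\uu$.

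The crux is therefore to show $|S| = \sqrt{q}$. The cleanest route avoids evaluating the Gauss-type sum $S$ directly: since $|\cN_f(\uu)|$ is independent of $\uu$, the $2q$-nega-Parseval's identity $\sum_{\uu\in\mathbb{Z}_q^n}|\cN_f(\uu)|^2 = q^n$ is a sum of $q^n$ equal terms, which forces each term to equal $1$. This simultaneously yields $|S|=\sqrt{q}$ and the desired conclusion $|\cN_f(\uu)|=1$ for all $\uu$. I expect the main obstacle to be exactly this magnitude computation, and I anticipate that the shift-invariance of Lemma \ref{lemma5} is doing all the essential work: it is what makes $|\cN_f|$ constant, and it is the only place where the evenness of $q$ is used (for odd $q$ one has $(x+q)^2 \equiv x^2 + q \pmod{2q}$, so shift-invariance, and hence constancy, fails). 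As an alternative to Parseval one could instead evaluate $S$ directly: using $q$ even, $(x+q)^2 \equiv x^2 \pmod{2q}$ gives $2S = \sum_{x=0}^{2q-1}\omega^{x^2}$, the classical quadratic Gauss sum modulo $2q$, whose modulus is $\sqrt{2\cdot 2q} = 2\sqrt{q}$ because $2q \equiv 0 \pmod 4$; this again yields $|S| = \sqrt{q}$.
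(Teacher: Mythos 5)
Your proposal is correct, and its first half (substituting $f$, cancelling $\omega^{-\Sigma\hat{x}_i}$ against $\omega^{\Sigma\hat{x}_i}$, completing the square, and invoking Lemma \ref{lemma5} with $a=\hat{u}_i$ to get $\cN_f(\uu)=q^{-n/2}\,\omega^{-\Sigma\hat{u}_i^2}S^n$ with $S=\sum_{x\in\mathbb{Z}_q}\omega^{\hat{x}^2}$) is exactly the paper's argument. Where you genuinely diverge is the final step. The paper evaluates $|S|^2$ head-on: it writes $|S|^2=\sum_{x}\omega^{\hat{x}^2}\sum_{y}\omega^{-\hat{y}^2}$, reindexes $y=x+t$ using Lemma \ref{lemma5} a \emph{second} time (this is another place where evenness of $q$ is needed), expands to get $\sum_{t}\omega^{-\hat{t}^2}\sum_{x}\xi^{-\hat{x}\hat{t}}$, and kills all terms with $t\neq 0$ by character orthogonality (Lemma \ref{chap5-lemimp}), yielding $|S|^2=q$ directly. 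You instead observe that $|\cN_f(\uu)|$ is constant in $\uu$ and let the $2q$-nega-Parseval identity $\sum_{\uu}|\cN_f(\uu)|^2=q^n$ force that constant to be $1$; since Parseval is proved earlier in the paper for arbitrary $f\in\cNB_{n,q}$, there is no circularity, and your argument is sound. Your route is more economical -- it needs Lemma \ref{lemma5} only once and requires no further computation -- and it illustrates a reusable principle: any function whose $2q$-NHT has constant modulus is automatically $2q$-negabent. The paper's route is more self-contained and explicit, actually producing $|S|=\sqrt{q}$ (a Gauss-sum evaluation) rather than inferring it, which is information one might want elsewhere. Your alternative via the classical quadratic Gauss sum modulo $2q$ is also correct (indeed $2q\equiv 0\pmod 4$ when $q$ is even), but it imports an external result where the paper and your Parseval argument stay within the tools developed in the text.
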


\begin{proof}
The $2q$-NHT of $f$ at $\uu \in \mathbb{Z}_q^n$ is
\begin{eqnarray*}
q^{n/2}\cN_f(\uu)  &=& \sum_{\xx \in \mathbb{Z}_q^n}
{\omega}^{\hat{x}_1^2+ \cdots \hat{x}_n^2 - \hat{x}_1 - \cdots -
\hat{x}_n}
\xi^{\langle\hat{\uu}, \hat{\xx}\rangle} \omega^{\Sigma \hat{x}_j}\\
&=& \sum_{\xx \in \mathbb{Z}_q^n} {\omega}^{\hat{x}_1^2+ \cdots
\hat{x}_n^2 +
2\hat{u}_1 \hat{x}_1 + \cdots + 2\hat{u}_n \hat{x}_n} \\
&=& \sum_{\xx \in \mathbb{Z}_q^n} \prod_{j=1}^n {\omega}^{\hat{x}_j^2 +
2\hat{u}_j \hat{x}_j}\\
&=& \prod_{j=1}^n \sum_{x_j \in \mathbb{Z}_q} {\omega}^{\hat{x}_j^2 +
2\hat{u}_j \hat{x}_j} \\
&=& \prod_{j=1}^n \left({\omega}^{ - \hat{u}_j^2} \sum_{x_j \in \mathbb{Z}_q} {\omega}^{(\hat{x}_j + \hat{u}_j)^2}\right)\\
&=& \prod_{j=1}^n {\omega}^{ - \hat{u}_j^2} \cdot \(\sum_{x_j \in
\mathbb{Z}_q}{\omega}^{\hat{x}_j^2}\)^n \\
&&(\mbox{using Lemma \ref{lemma5}}).
\end{eqnarray*}
Now we have
\begin{eqnarray} \nonumber
\begin{split}
\left|\sum_{x \in \mathbb{Z}_q} {\omega}^{\hat{x}^2}\right|^2 & =
\sum_{x \in \mathbb{Z}_q} {\omega}^{\hat{x}^2}
 \sum_{y \in
\mathbb{Z}_q} {\omega}^{-\hat{y}^2}\\
& = \sum_{x\in \mathbb{Z}_q} {\omega}^{\hat{x}^2} \sum_{t \in
\mathbb{Z}_q} {\omega}^{-(\hat{x}+\hat{t})^2} \\
&(\mbox{using Lemma \ref{lemma5}})\\
& =\sum_{t \in \mathbb{Z}_q} {\omega}^{\hat{t}^2} \sum_{x\in
\mathbb{Z}_q} {\xi}^{-\hat{x}\hat{t}} = q~,\\
\end{split}
\end{eqnarray}
since $\sum_{x\in
\mathbb{Z}_q} {\xi}^{-\hat{x}\hat{t}} = \left\{\begin{array}{ll} q, & \text{if} ~ t=0, \\
0, & \text{otherwise}.\end{array}\right.$
Then it follows that
\[\left|\cN_f(\uu)\right| = \left|\prod_{j=1}^n \omega^{-\hat{u}_j^2}\right| =1~.\]
Since $\uu \in \mathbb{Z}_q^n$ is arbitrary, the result follows.
\end{proof}

In the second construction, $2q$-negabent functions are constructed for $n=2$ and any positive integer $q \ge 2$.

\begin{theorem}\label{thm8}
Let $q \ge 2$ be any integer and $n=2$. Then the function $f: \mathbb{Z}_q^n \rightarrow \mathbb{Z}_{2q}$ defined by
\[f(x_1, x_2) = 2x_1x_2 + x_1\]
is a $2q$-negabent function.
\end{theorem}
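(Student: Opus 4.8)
The plan is to evaluate $\cN_f(\uu)$ directly from the definition and show that, for every $\uu\in\BBZ_q^2$, it collapses to a single root of unity. Since $n=2$ the normalization is $q^{n/2}=q$, and substituting $f(x_1,x_2)=2\hat{x}_1\hat{x}_2+\hat{x}_1$ gives
\[
q\,\cN_f(\uu)=\sum_{x_1,x_2\in\BBZ_q}\omega^{2\hat{x}_1\hat{x}_2+\hat{x}_1}\,\xi^{\hat{u}_1\hat{x}_1+\hat{u}_2\hat{x}_2}\,\omega^{\hat{x}_1+\hat{x}_2}.
\]
The first thing I would do is use the convention $\xi=\omega^2$ already employed in the proofs of Theorem~\ref{lemimp2} and Theorem~\ref{thm7} (i.e.\ $\omega=e^{\pi\imath/q}$, $\xi=e^{2\pi\imath/q}$) to rewrite every factor as a power of $\omega$, turning the summand into $\omega$ raised to the integer exponent $2\hat{x}_1\hat{x}_2+\hat{x}_1+2\hat{u}_1\hat{x}_1+2\hat{u}_2\hat{x}_2+\hat{x}_1+\hat{x}_2$.

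Next I would collect this exponent according to the variable $\hat{x}_1$. Grouping the terms yields $2\hat{x}_1(\hat{x}_2+1+\hat{u}_1)+(2\hat{u}_2+1)\hat{x}_2$, so the double sum factors as
\[
q\,\cN_f(\uu)=\sum_{x_2\in\BBZ_q}\omega^{(2\hat{u}_2+1)\hat{x}_2}\Bigl(\sum_{x_1\in\BBZ_q}\omega^{2\hat{x}_1(\hat{x}_2+1+\hat{u}_1)}\Bigr).
\]
The purpose of this arrangement is that the coefficient of $\hat{x}_1$ in the exponent is \emph{even}, so the inner sum is genuinely a sum of powers of $\xi=\omega^2$, namely $\sum_{x_1\in\BBZ_q}\xi^{\hat{x}_1(\hat{x}_2+1+\hat{u}_1)}$.

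The key step is then to apply the one-dimensional case of Lemma~\ref{chap5-lemimp} to this inner sum: it equals $q$ when $\hat{x}_2+1+\hat{u}_1\equiv0\pmod q$ and $0$ otherwise. As $\hat{x}_2$ ranges over the single residue system $\{0,1,\dots,q-1\}$, exactly one value $\hat{x}_2^\ast=q-1-\hat{u}_1$ satisfies this congruence, so only one term of the outer sum survives and $\cN_f(\uu)=\omega^{(2\hat{u}_2+1)\hat{x}_2^\ast}$, a root of unity; hence $|\cN_f(\uu)|=1$, and since $\uu$ is arbitrary, $f$ is $2q$-negabent.

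The delicate point, and the place I would be most careful, is the bookkeeping around the cross term $2\hat{x}_1\hat{x}_2$: unlike the separable quadratic of Theorem~\ref{thm7}, here $f$ carries no $\hat{x}_2^2$ contribution, so summing over $x_1$ does not produce a Gauss-type sum but instead acts as an orthogonality ``delta'' (via Lemma~\ref{chap5-lemimp}) that pins down a unique $x_2$. I would check explicitly that the surviving exponent $(2\hat{u}_2+1)\hat{x}_2^\ast$ is a genuine integer read modulo $2q$ and that precisely one representative $x_2$ solves the linear congruence, so that no hidden multiplicity or cancellation disturbs the unit-modulus conclusion.
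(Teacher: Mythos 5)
Your proposal is correct and follows essentially the same route as the paper's proof: the same regrouping of the exponent as $2(\hat{x}_2+\hat{u}_1+1)\hat{x}_1+(2\hat{u}_2+1)\hat{x}_2$, the same factorization of the double sum, and the same use of the orthogonality relation (Lemma \ref{chap5-lemimp}) to isolate the single surviving term $\hat{x}_2^\ast=q-1-\hat{u}_1$, giving $\cN_f(\uu)=\omega^{(2\hat{u}_2+1)(q-\hat{u}_1-1)}$. Your phrasing of the survival condition as a congruence mod $q$ rather than the exact equality $\hat{x}_2+\hat{u}_1+1=q$ is an immaterial difference, since $1\le\hat{x}_2+\hat{u}_1+1\le 2q-1$.
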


\begin{proof}
For any $\uu = (u_1, u_2) \in \mathbb{Z}_q^2$, we have
\begin{align*}
&q\cN_f(\uu)  \\
& =  \sum_{(x_1, x_2) \in \mathbb{Z}_q^2}
\omega^{2\hat{x}_1\hat{x}_2+ \hat{x}_1} \xi^{\langle(\hat{u}_1, \hat{u}_2), (\hat{x_1}, \hat{x_2})\rangle} \omega^{\hat{x}_1+\hat{x_2}} \\
& =  \sum_{(x_1, x_2) \in \mathbb{Z}_q^2} \omega^{2(\hat{x}_2 + \hat{u}_1+1)\hat{x}_1 + (2\hat{u}_2+1)\hat{x}_2}  \\
& =  \sum_{x_1 \in \mathbb{Z}_q}\sum_{x_2 \in \mathbb{Z}_q} \omega^{2(\hat{x}_2 + \hat{u}_1+1)\hat{x}_1 + (2\hat{u}_2+1)\hat{x}_2} \\
& =   \sum_{x_2 \in \mathbb{Z}_q} \omega^{(2\hat{u}_2+1)\hat{x}_2} \sum_{x_1 \in \mathbb{Z}_q} \omega^{2(\hat{x}_2+\hat{u}_1+1)\hat{x}_1}\\
& =   \sum_{x_2 \in \mathbb{Z}_q} \omega^{(2\hat{u}_2+1)\hat{x}_2} \sum_{x_1 \in \mathbb{Z}_q} \xi^{(\hat{x}_2+\hat{u}_1+1)\hat{x}_1}~.
\end{align*}
Now
\[\sum_{x_1 \in \mathbb{Z}_q} \xi^{(\hat{x}_2+\hat{u}_1+1)\hat{x}_1}  = \left\{\begin{array}{ll} q, & \text{if} ~\hat{x}_2+\hat{u}_1+1=q, \\
0, & \text{otherwise}.\end{array}\right.
\]
Therefore,
\[ q\cN_f(\uu) = \omega^{(2\hat{u}_2+1)(q-\hat{u}_1-1)}q~,\]
which implies that $|\cN_f(\uu)| = 1$. Hence $f$ is a $2q$-negabent function. \end{proof}

\section{Examples} In this section,  we present
 some examples of  $2q$-negabent functions for different values of $q$ and $n$.

\begin{example}
The following two examples illustrate the construction of $2q$-negabent functions given in Theorem \ref{thm7}.
\begin{enumerate}
\item[(i)] Let $q=4$, $n=3$ and $f(x)= \hat{x}_1^2 + \hat{x}_2^2 + \hat{x}_3^2 - \hat{x}_1 - \hat{x}_2 - \hat{x}_3$. Then, after computation, we get $|\cN_f(\uu)| = \frac{1}{4^{3/2}} \times 8 = 1$ for all $\uu \in \mathbb{Z}_4^3$. Hence $f$ is a $2q$-negabent function for $q=4$.

\item[(ii)] Let $q=6$, $n=4$ and $f(x)= \hat{x}_1^2 + \hat{x}_2^2 + \hat{x}_3^2 + \hat{x}_4^2 - \hat{x}_1 - \hat{x}_2 - \hat{x}_3 - \hat{x}_4$. Then, after computation, we get $|\cN_f(\uu)| = \frac{1}{6^{4/2}} \times 36 = 1$ for all $\uu \in \mathbb{Z}_6^4$. Hence $f$ is a $2q$-negabent function for $q=6$.
\end{enumerate}
\end{example}

\begin{example}
This example illustrates the construction given in Theorem \ref{thm8}.

Let $n=2$, $q=3$, and $f\in\cNB_{2, 3}$ such that $f(x_1,
x_2)=2\hat{x}_1\hat{x}_2+\hat{x}_1$. Then the $2q$-NHT of $f$ at any $(u_1,
u_2)\in\BBZ_3^2$ is
\begin{align*}\label{eq-negabent-exp1}
&\cN_f(u_1, u_2) \\
& =  \frac{1}{3^{2/2}} \sum_{(x_1, x_2)\in\BBZ_3^2} \omega^{2\hat{x}_1\hat{x}_2+\hat{x}_1} \xi^{\langle(\hat{u}_1, \hat{u}_2), (\hat{x}_1, \hat{x}_2)\rangle}
\omega^{\hat{x}_1+\hat{x}_2}\\
& =  \frac{1}{3} \sum_{(x_1, x_2)\in\BBZ_3^2} \omega^{2\hat{x}_1\hat{x}_2+\hat{x}_1} \omega^{2(\hat{u}_1\hat{x}_1 + \hat{u}_2\hat{x}_2)}\omega^{\hat{x}_1+\hat{x}_2}\\
& =  \frac{1}{3}\sum_{x_1, x_2 \in \mathbb{Z}_3}\omega^{2(\hat{x}_2 + \hat{u}_1+1)\hat{x}_1 + (2\hat{u}_2+1)\hat{x}_2} \\
& =   \frac{1}{3} \sum_{x_2 \in \mathbb{Z}_3} \omega^{(2\hat{u}_2+1)\hat{x}_2} \sum_{x_1 \in \mathbb{Z}_3} \omega^{2(\hat{x}_2+\hat{u}_1+1)\hat{x}_1}\\
& =  \frac{1}{3}\left(3\omega^{(2\hat{u}_2+1)(2-\hat{u}_1)}\right)\\
& =  \omega^{(2\hat{u}_2+1)(2-\hat{u}_1)}~.
\end{align*}
Therefore, $|\cN_f(u_1, u_2)| = |\omega^{(2\hat{u}_2+1)(2-\hat{u}_1)}| = 1$, and hence $f$ is a $2q$-negabent function.
\end{example}

\begin{example}
Given below are some more  examples of $2q$-negabent functions.
\begin{enumerate}
    \item $f(x_1, x_2, x_3) = \hat{x}_1^2+\hat{x}_2^2+\hat{x}_3^2$ is a $2q$-negabent function on $3$ variables for $q$ an odd integer.

    \item $f(x_1, x_2, x_3, x_4) = \hat{x}_1^2+\hat{x}_2^2+\hat{x}_3^2+\hat{x}_4^2$ is a $2q$-negabent function on $4$ variables for  $q$ an odd integer.

    \item $f(x_1, x_2, x_3, x_4) = \hat{x}_1^2+\hat{x}_2^2+\hat{x}_3^2+\hat{x}_4^2 +2\hat{x}_1\hat{x}_2+2\hat{x}_3\hat{x}_4+2\hat{x}_2\hat{x}_4$
    is a $2q$-negabent function on $4$ variables for  $q=2, 3, 5, 7,9, \ldots. $

    \item $f(x)=\hat{x}^2+\hat{x}$ is a $2q$-negabent function on one
    variable for $q=2, 4, 6, 8, \ldots. $

    \item $f(x)=2\hat{x}^2+\hat{x}$ is a $2q$-negabent function on one
    variable for $q=3, 5, 7,  \ldots. $

    \item $f(x)=2\hat{x}^4+\hat{x}^2$ is a $2q$-negabent function on one
    variable for $q=9, 27, 81,  \ldots. $

    \item $f(x)=2\hat{x}^4+2\hat{x}^3+2\hat{x}^2+\hat{x}$ is a $2q$-negabent function on one
    variable  for $q=3, 4, 12. $

    \item $f(x_1, x_2)=\hat{x}_1^3+2\hat{x}_1\hat{x}_2+2\hat{x}_2^2$  is a $2q$-negabent
     function on two
    variables for $q=2$ and $3$.

    \item $f(x_1, x_2)=\hat{x}_1^3+2\hat{x}_1\hat{x}_2+\hat{x}_2^2$  is a $2q$-negabent
     function on two
    variables for $q=2, 3, 9, 27, 81, \ldots.$

    \item $f(x_1, x_2)= 2\hat{x}_1\hat{x}_2^2+2\hat{x}_1^2\hat{x}_2+2\hat{x}_1^2+
    2\hat{x}_2^2+\hat{x}_1+ \hat{x}_2$
     is a $2q$-negabent function on two
    variables for $q= 4. $
\end{enumerate}
\end{example}

%\section*{Acknowledgment} The authors would like thank M. G. Parkar for useful discussions with him about some results of the paper.

\section{Conclusion}
In this paper we have introduced a new class of negabent functions. These functions are defined from $\mathbb{Z}_q^n$ to $\mathbb{Z}_{2q}$. We call these functions $2q$-negabent and the corresponding Hadamard transform as $2q$-nega-Hadamard transform. Some properties of $2q$-nega-Hadamard transform and those of $2q$-negabent functions are presented. Two constructions for $2q$-negabent functions are also presented. Some examples of these functions are given.

\end{document}